
\documentclass[preprint]{elsarticle}

\usepackage{template}
\usepackage{framed,multirow}

\usepackage{amssymb}
\usepackage{latexsym}
\usepackage{slashbox}

\usepackage{url}
\usepackage{color}
\usepackage{lineno}
\usepackage{amsmath,mathrsfs}
\usepackage{amsthm}
\usepackage[caption=false]{subfig}
\usepackage{dsfont}
\usepackage{float}
\usepackage{gensymb}
\usepackage{natbib}
\usepackage{graphicx,titlesec}
\usepackage{epstopdf}
\usepackage{geometry}
\usepackage{booktabs}
\definecolor{newcolor}{rgb}{.8,.349,.1}

\usepackage{varioref}
\usepackage{hyperref}
\usepackage{cleveref}

\hypersetup{
	colorlinks=true,
	linkcolor={red!50!black},
	citecolor={blue!50!black},
	urlcolor={blue!80!black}
}

\newcommand{\tabincell}[2]{\begin{tabular}{@{}#1@{}}#2\end{tabular}}

\def\m{\mbox{\boldmath $m$}}

\newcommand{\abs}[1]{\left\lvert#1\right\rvert}

\newtheorem{example}{Example}[section]
\newtheorem{lemma}{Lemma}

\crefformat{equation}{(#2#1#3)}
\crefmultiformat{equation}{(#2#1#3)}{ and~(#2#1#3)}{, (#2#1#3)}{ and~(#2#1#3)}

\crefformat{figure}{Fig.~#2#1#3}
\crefmultiformat{figure}{Figures~(#2#1#3)}{ and~(#2#1#3)}{, (#2#1#3)}{ and~(#2#1#3)}
\crefformat{table}{Table~#2#1#3}

\begin{document}


\begin{frontmatter}
\title{Numerical methods for antiferromagnetics}

\author[1]{Panchi Li}
\ead{LiPanchi1994@163.com}
\author[1,2]{Jingrun Chen\corref{cor1}}
\cortext[cor1]{Corresponding authors}
\ead{jingrunchen@suda.edu.cn}
\author[1,2]{Rui Du\corref{cor1}}
\ead{durui@suda.edu.cn}
\author[3]{Xiao-Ping Wang\corref{cor1}}
\ead{mawang@ust.hk}

\address[1]{School of Mathematical Sciences, Soochow University, Suzhou, 215006, China.}
\address[2]{Mathematical Center for Interdisciplinary Research, Soochow University, Suzhou, 215006, China.}
\address[3]{Department of Mathematics, The Hong Kong University of Science and Technology, Clear Water Bay, Kowloon, Hong Kong, China.}

\begin{abstract}
Compared with ferromagneitc counterparts, antiferromagnetic materials are considered as the future of spintronic applications
since these materials are robust against the magnetic perturbation, produce no stray field, and display ultrafast dynamics.
There are (at least) two sets of magnetic moments in antiferromagnets (with magnetization of the same magnitude but antiparallel directions)
and ferrimagnets (with magnetization of the different magnitude). The coupled dynamics for the bipartite collinear antiferromagnets
is modeled by a coupled system of Landau-Lifshitz-Gilbert equations with an additional term originated from the antiferromagnetic exchange, which
leads to femtosecond magnetization dynamics. In this paper, we develop three Gauss-Seidel projection methods for micromagnetics simulation in antiferromagnets and ferrimagnets. They are first-order accurate in time and second-order in space, and only solve linear systems of equations with constant coefficients at each step. Femtosecond dynamics, N\'{e}el wall structure, and phase transition in presence of an external magnetic field for antiferromagnets are
provided with the femtosecond stepsize.
\end{abstract}

\begin{keyword}

\KWD Antiferromagnet\sep Landau-Lifshitz-Gilbert equation\sep Guass-Seidel projection methods\sep antiferromagnetic exchange\sep micromagnetics simulation\\
\MSC[2000] 35Q99 \sep 65Z05 \sep 65M06
\end{keyword}

\end{frontmatter}


\section{Introduction}

An electron has both charge and spin properties. The active manipulation of spin degrees of freedom in solid-state systems is known as spintronics \cite{ZuticFabianDasSarma:2004, Gomonay2014Review}. Most of researches have been focused on ferromagnets (FMs)
for GHz-frequency magnetization dynamics in the past decades and the domain wall velocity can reach $\sim100\;\textrm{m}/\textrm{s}$.
In early days, antiferromagnets (AFMs), however, are thought to be less effective for spintronic manipulations since they are robust against
the magnetic perturbation and produce no stray field \cite{Baltz2018Review}. In fact, it was latter realized that the robustness of AFMs
with respect to the magnetic perturbation makes them better candidates for spintronic applications due to the high stability of domain
wall structures. In addition, one striking feature in AFMs is the femtosecond magnetization dynamics due to the antiferromagnetic
exchange coupling, which has been employed to generate THz-frequency magnetization dynamics \cite{Puliafito2019Model,Ivanov2014femtosecond,Luis2019DW},
and boosts the domain wall velocity in AFMs to
$\sim10,000\;\textrm{m}/\textrm{s}$ \cite{Gomonay2016domainwall,Gomonay2016Highdominwall,Shiino2016domainwall}.
Moreover, room-temperature antiferromagnetic order has been found over a broad range of materials, such as metal, semiconductor, and insulator,
which can be used in spintronics devices such as racetrack memories, memristors, and sensors \cite{Luis2019DW, Baltz2018Review,Wadley2016Electrical,ultrafast2008Fiebig}.

From the modeling perspective, magnetization $\mathbf{M}$ is the basic quantity of interest. In ferromagnetic materials, its dynamics
is modeled by the Landau-Lifshitz-Gilbert (LLG) equation \cite{LandauLifshitz1935,Gilbert1955} with the property that $\abs{\mathbf{M}} = M_s$
where $M_s$ is the saturation magnetization. In general, an AFM contains $n$ magnetic sublattices and the magnetization
$\mathbf{M}=\sum_{\lambda=1}^n\mathbf{M}_{\lambda}=\mathbf{0}$ with $\mathbf{M}_{\lambda}\neq\mathbf{0}$.
Most common AFMs, such as FeMn and NiO,  have two sublattices $A$ and $B$, associated with two magnetization fields
$\mathbf{M}_A$ and $\mathbf{M}_B$ satisfying $\mathbf{M} = \mathbf{M}_A + \mathbf{M}_B = \mathbf{0}$. For each sublattice,
its magnetization satisfies a LLG equation with an antiferromagnetic exchange term which couples these two equations.
In the physics community, by introducing another order parameter $\mathbf{L} = \mathbf{M}_A - \mathbf{M}_B$,
reduced models for AFMs are developed for describing magnetization dynamics \cite{Ivanov2014femtosecond, Baltz2018Review}.

In this work, we focus on numerical methods for AFMs and ferrimagnets described by the coupled system of LLG equations.
By ferrimagnets such as Fe$_3$O$_4$, we mean $\abs{\mathbf{M}_A} \neq \abs{\mathbf{M}_B}$. A large volume of methods
have been proposed for FMs; see \cite{Prohl2006, CJreview2007, Cimrak2007} for reviews and references therein.
For the temporal discretization, there are explicit schemes\cite{alouges2006convergence,FourRK2008}, implicit schemes\cite{Yamada2004Implicit,bartels2006convergence,implicit2012}, and semi-implicit schemes\cite{NumGSPM2001,panchi2019GSPM,NumMethods2000,SecSemi2019,seim2005,Changjian2019semiimplicit}.
However, there is no work on numerical methods for AFMs in the literature.

At a first glance, numerical methods for FMs can be directly applied to AFMs and ferrimagnets with minor modifications.
However, pros and cons of different methods for FMs may not be directly transferred. For example, explicit methods
require sub-picosecond stepsize in micromagnetics simulation. For AFMs or ferrimagnets, the stepsize becomes
sub-femtosecond for explicit methods while the time scale of interest is of nanoseconds.
The underlying reason is that the antiferromagnetic exchange term poses an characteristic time scale of femtoseconds
on magnetization dynamics. Even unconditionally stable implicit and semi-implicit schemes have to resolve the magnetization
dynamics at femtosecond scales in order to capture the correct physics. Implicit schemes solve a nonlinear system
of equations at each step. From FMs to AFMs, the dimension of the nonlinear system is doubled with possibly more
solutions (locally stable magnetic structures). In semi-implicit schemes, the nonlinear structure of the coupled
system does not bring any difficulty in an explicit way and only the computational complexity is doubled.
Therefore, semi-implicit methods provide the best compromise between
stability and efficiency. Gauss-seidel projection methods (GSPMs) \cite{NumGSPM2001, panchi2019GSPM} are of our
first choice since only linear systems of equations with constant coefficients needs to be solved at each step.
As shown in the paper, the coupled system of LLG equations can be solved by GSPMs with the computational complexity
doubled at each step. Due to the antiferromagnetic exchange, the stepsize in GSPMs is femtosecond as expected.

The rest of this paper is organized as follows. We introduce the model for AFMs and ferrimagnets in \Cref{sec:model}.
The corresponding Gauss-Seidel projection methods are described in \Cref{sec:schemes}.
Their accuracy with respect to temporal and spatial stepsizes is validated in \Cref{sec:experiments}.
Femtosecond magnetization dynamics, N\'{e}el wall structures, and phase transition in presence of a magnetic field
are simulated for AFMs in \Cref{sec:simulaions}. Conclusions are drawn in \Cref{sec:conclusion}.

\section{Model for antiferromagnets: A coupled system of Landau-Lifshitz-Gilbert equations}\label{sec:model}

Consider a bipartite collinear magnetic material occupied by $\Omega\in\mathds{R}^3$. Its magnetization structure can be a ferromagnetic phase
(\cref{fig:magnetictypes}(a)), a ferrimagnetic phase (\cref{fig:magnetictypes}(b)), or an antiferromagnetic phase (\cref{fig:magnetictypes}(c)).
The dyanmics of the two-sublattices system is modeled by a coupled system of phenomenological LLG equations \cite{Baltz2018Review}
\begin{equation}
\left\{
\begin{aligned}
\partial_t\mathbf{M}_A &= -\mu_0\gamma\mathbf{M}_A\times\mathbf{H}_A + \frac{\alpha}{M_s}\mathbf{M}_A\times\partial_t\mathbf{M}_A, \\
\partial_t\mathbf{M}_B &= -\mu_0\gamma\mathbf{M}_B\times\mathbf{H}_B + \frac{\alpha}{M_s}\mathbf{M}_B\times\partial_t\mathbf{M}_B.
\end{aligned}
\right.
\label{LLG}
\end{equation}
\begin{figure}[htbp]
	\centering
	\subfloat[Ferromagnetism]{\label{fig:ferromagetic}\includegraphics[width=1.5in]{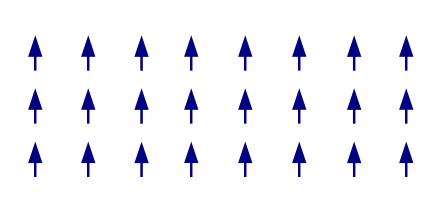}}
	\subfloat[Ferrimagnetism]{\label{fig:ferrimagetic}\includegraphics[width=1.5in]{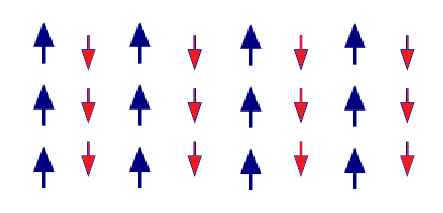}}
	\subfloat[Antiferromagnetism]{\label{fig:antiferromagetic}\includegraphics[width=1.5in]{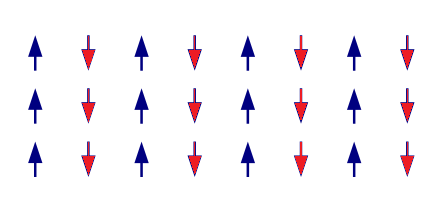}}
	\caption{Orientations of magnetic moments in magnetic materials in the ground state. (a) Ferromagneitc phase; (b) Ferrimagnetic phase; (c) Antiferromagnetic phase.}\label{fig:magnetictypes}
\end{figure}
$\alpha$ is the dimensionless damping parameter, $\gamma$ is the gyromagnetic ratio, and $\mu_0$ is the magnetic permeability of vacuum.
For each sublattice $\lambda = A, B$, we have magnetization $\mathbf{M}_{\lambda} = (M_{\lambda 1}, M_{\lambda 2}, M_{\lambda 3})^T$
with $|\mathbf{M}_{\lambda}| = M_s$. For ferrimagnets, $M_s$ is different for $A$ and $B$.
$\mathbf{H}_{\lambda} = -\delta W/\delta\mathbf{M}_{\lambda}$ with $W$ being the magnetic energy density of an antiferromagnetic or ferrimagnetic system including magnetic anisotropy, ferromagnetic exchange, antiferromagnetic exchange,
and Zeeman energy (external magnetic field) \cite{Puliafito2019Model,Luis2019DW}
\begin{equation}
\label{equ:energy}
W[\mathbf{M}_A, \mathbf{M}_B] = W_\textrm{a}[\mathbf{M}_A, \mathbf{M}_B] + W_\textrm{e}[\mathbf{M}_A, \mathbf{M}_B] +
W_\textrm{ae}[\mathbf{M}_A, \mathbf{M}_B] + W_\textrm{ext}[\mathbf{M}_A, \mathbf{M}_B].
\end{equation}
\eqref{LLG} can be viewed as two sets of LLG equations for $\mathbf{M}_A$ and $\mathbf{M}_B$ with the antiferromagnetic exchange coupling in \eqref{equ:energy}.

Details of the four terms in \eqref{equ:energy} are described as follows:
\begin{itemize}[leftmargin=*]
\item[1. ] \textbf{Anisotropy energy:} Magnetization usually favors an easy-axis direction of the form
\begin{equation*}
\label{Anisotropy-energy}
W_\textrm{a}[\mathbf{M}_A, \mathbf{M}_B] = \int_{\Omega}\Phi(\mathbf{M}_A) + \Phi(\mathbf{M}_B),
\end{equation*}
where $\Phi:\Omega \to \mathbb{R}^+$ is a smooth function. Suppose the easy-axis direction is the x-axis for a uniaxial material,
the total anisotropy energy of two sublattices is
\begin{equation*}
W_\textrm{a}[\mathbf{M}_A, \mathbf{M}_B] = \frac{K_u}{M_s^2}\int_{\Omega}(M_{A2}^2 + M_{A3}^2 + M_{B2}^2 + M_{B3}^2)\mathrm{d}\mathbf{x},
\label{equ:anisotropyenergy}
\end{equation*}
where $K_u$ is the material parameter.
\item[2. ]\textbf{Ferromagnetic exchange:} Magnetization of each sublattice experiences a ferromagnetic exchange energy of the form
\begin{equation*}
\label{Ferromagnetic-exchange}
W_\textrm{e}[\mathbf{M}_A, \mathbf{M}_B] = \frac{A}{M_s^2}\int_{\Omega}|\nabla\mathbf{M}_A|^2 + |\nabla\mathbf{M}_B|^2,
\end{equation*}
where $A$ presents the exchange constant of the material.
\item[3. ]\textbf{Antiferromagnetic exchange:} Magnetization of sublattice A and sublattice B
favors alignment along an antiparallel direction, thus the exchange energy is of the form
\begin{equation}
\label{AFM-exchange}
W_\textrm{ae}[\mathbf{M}_A, \mathbf{M}_B] = \frac{4A_{AFM}}{a^2M_s^2}\int_{\Omega}\mathbf{M}_A\cdot\mathbf{M}_B,
\end{equation}
where $A_{AFM}$ is the antiferromagnet exchange parameter and $a$ is the atomic lattice constant. For a positive $A_{AFM}$, the system
favors a ferromagnetic state. For a negative $A_{AFM}$, however, an antiferromagnetic state is preferred.
\item[4. ]\textbf{Zeeman energy:} In the presence of an external magnetic filed $\mathbf{H}_\textrm{ext}$, the interaction
energy is of the form
\begin{equation*}
\label{Zeeman-energy}
W_\textrm{ext}[\mathbf{M}_A, \mathbf{M}_B] = -\mu_0\int_{\Omega}\mathbf{H}_\textrm{ext}\cdot(\mathbf{M}_A + \mathbf{M}_B).
\end{equation*}
\end{itemize}
Thus, the free energy of an antiferromagnetic or ferrimagnetic material is explicitly written as
\begin{align}
\label{Free-energy}
W[\mathbf{M}_A, \mathbf{M}_B] = &\int_{\Omega}\Phi(\mathbf{M}_A) + \Phi(\mathbf{M}_B) +
\frac{A}{M_s^2}\int_{\Omega}|\nabla\mathbf{M}_A|^2 + |\nabla\mathbf{M}_B|^2 \nonumber\\
&+ \frac{4A_{AFM}}{a^2M_s^2}\int_{\Omega}(\mathbf{M}_A\cdot\mathbf{M}_B) -
\mu_0\int_{\Omega}\mathbf{H}_\textrm{ext}\cdot(\mathbf{M}_A + \mathbf{M}_B).
\end{align}

The system of LLG equations \eqref{LLG} can be rewritten equivalently as
\begin{equation}
\left\{
\begin{aligned}
  \frac{\partial\mathbf{M}_A}{\partial t} &= -\frac{\mu_0\gamma}{1+\alpha^2}\mathbf{M}_A\times\mathbf{H}_A
-\frac{\alpha\mu_0\gamma}{(1+\alpha^2)M_s}\mathbf{M}_A\times(\mathbf{M}_A\times\mathbf{H}_A), \\
\frac{\partial\mathbf{M}_B}{\partial t} &= -\frac{\mu_0\gamma}{1+\alpha^2}\mathbf{M}_B\times\mathbf{H}_B
-\frac{\alpha\mu_0\gamma}{(1+\alpha^2)M_s}\mathbf{M}_B\times(\mathbf{M}_B\times\mathbf{H}_B),
\end{aligned}
\right.
\label{LLG2}
\end{equation}
and the effective fields are
\begin{align*}
  \mathbf{H}_A &= -\frac{2K_u}{M_s^2}(M_{A2}\mathbf{e}_2+M_{A3}\mathbf{e}_3) + \frac{2A}{M_s^2}\Delta\mathbf{M}_A - \frac{4A_{AFM}}{a^2M_s^2}\mathbf{M}_B + \mu_0\mathbf{H}_\textrm{ext} \\
  \mathbf{H}_B &= -\frac{2K_u}{M_s^2}(M_{B2}\mathbf{e}_2+M_{B3}\mathbf{e}_3) + \frac{2A}{M_s^2}\Delta\mathbf{M}_B - \frac{4A_{AFM}}{a^2M_s^2}\mathbf{M}_A + \mu_0\mathbf{H}_\textrm{ext}
\end{align*}
with $\mathbf{e}_2 = (0,1,0)$ and $\mathbf{e}_3 = (0,0,1)$.

To ease the description, we now nondimensionlize \eqref{LLG2}.
Defining $\mathbf{M}_{\lambda} = M_s\mathbf{m}_{\lambda}$, $\mathbf{H}_\textrm{ext} = M_s\mathbf{h}_e$, $\mathbf{x} = L\mathbf{x'}$ with
$L$ the diameter of $\Omega$, and $W[\mathbf{M}_A, \mathbf{M}_B] = (\mu_0M_s^2)W'[\mathbf{m}_A, \mathbf{m}_B]$, we have
\begin{align}
  \label{equ:energy_dimensionless}
  W'[\mathbf{m}_A, \mathbf{m}_B] = &q\int_{\Omega'}(m_{A2}^2+m_{A3}^2 + m_{B2}^2+m_{B3}^2) +\epsilon\int_{\Omega'}|\nabla\mathbf{m}_A|^2 + |\nabla\mathbf{m}_B|^2 \nonumber\\
  &+ \delta\int_{\Omega'}(\mathbf{m}_A\cdot\mathbf{m}_B) - \int_{\Omega'}\mathbf{h}_{e}\cdot(\mathbf{m}_A + \mathbf{m}_B),
\end{align}
where $q = 2K_u/(\mu_0M_s^2)$, $\epsilon = 2A/(\mu_0M_s^2L^2)$, and $\delta = 4A_{AFM}/(\mu_0a^2M_s^2)$. Upon rescaling time $t\rightarrow (1+\alpha^2)(\mu_0\gamma M_s)^{-1}t$, \eqref{LLG2} can be rewritten as
\begin{equation}
\left\{
\begin{aligned}
  \frac{\partial\mathbf{m}_A}{\partial t} &= -\mathbf{m}_A\times\mathbf{h}_A
-\alpha\mathbf{m}_A\times(\mathbf{m}_A\times\mathbf{h}_A),  \\
\frac{\partial\mathbf{m}_B}{\partial t} &= -\mathbf{m}_B\times\mathbf{h}_B
-\alpha\mathbf{m}_B\times(\mathbf{m}_B\times\mathbf{h}_B),
\end{aligned}
\right.  \label{equ:LLG2_dimensionless}
\end{equation}
where
\begin{align}
  \mathbf{h}_A &= -q(m_{A2}\mathbf{e}_2 + m_{A3}\mathbf{e}_3) + \epsilon\Delta\mathbf{m}_A - \delta\mathbf{m}_B + \mathbf{h}_e,  \label{equ:effectiveA}\\
  \mathbf{h}_B &= -q(m_{B2}\mathbf{e}_2 + m_{B3}\mathbf{e}_3) + \epsilon\Delta\mathbf{m}_B - \delta\mathbf{m}_A + \mathbf{h}_e.
  \label{equ:effectiveB}
\end{align}
Homogeneous Neumann boundary conditions are used
\begin{equation}
\frac {\partial\mathbf{m}_A}{\partial\nu}|\Gamma = 0, \quad \frac {\partial\mathbf{m}_B}{\partial\nu}|\Gamma = 0,
\end{equation}
where $\Gamma = \partial\Omega$ and $\nu$ is the unit outward normal vector along $\Gamma$. It's worth mentioning that the above model
is also used for ferrimagnetic materials with one of magnetization, $\abs{\mathbf{m}_B} < 1$.

It is easy to check from \eqref{equ:LLG2_dimensionless} that the following statement is true.
\begin{lemma}
For $\lambda = A, B$, we have
\[
\abs{\mathbf{m}_{\lambda}(t,\mathbf{x})} =  \abs{\mathbf{m}_{\lambda}(t_0,\mathbf{x})}, \;\forall \mathbf{x}\in\Omega,\;\forall t>t_0.
\]
\end{lemma}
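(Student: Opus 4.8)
The plan is to show that the pointwise magnitude $|\mathbf{m}_{\lambda}|$ has vanishing time derivative, which immediately yields the stated conservation. Since the two equations in \eqref{equ:LLG2_dimensionless} share identical algebraic structure, it suffices to treat $\lambda = A$; the argument for $\lambda = B$ is verbatim.

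First I would fix $\mathbf{x}\in\Omega$ and differentiate the squared magnitude in time, using $\tfrac{1}{2}\partial_t|\mathbf{m}_A|^2 = \mathbf{m}_A\cdot\partial_t\mathbf{m}_A$. Substituting the right-hand side of the first equation in \eqref{equ:LLG2_dimensionless} gives
\[
\tfrac{1}{2}\partial_t|\mathbf{m}_A|^2 = -\mathbf{m}_A\cdot(\mathbf{m}_A\times\mathbf{h}_A) - \alpha\,\mathbf{m}_A\cdot\left(\mathbf{m}_A\times(\mathbf{m}_A\times\mathbf{h}_A)\right).
\]
The key observation is that each term on the right is a dot product in which $\mathbf{m}_A$ is paired against a vector that is itself a cross product having $\mathbf{m}_A$ as its first factor. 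Since $\mathbf{u}\times\mathbf{v}$ is orthogonal to $\mathbf{u}$ for any $\mathbf{u},\mathbf{v}$, both $\mathbf{m}_A\times\mathbf{h}_A$ and $\mathbf{m}_A\times(\mathbf{m}_A\times\mathbf{h}_A)$ are perpendicular to $\mathbf{m}_A$, so each contribution vanishes identically. Hence $\partial_t|\mathbf{m}_A|^2 = 0$ at every $\mathbf{x}$, and integrating from $t_0$ to $t$ gives $|\mathbf{m}_A(t,\mathbf{x})| = |\mathbf{m}_A(t_0,\mathbf{x})|$.

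There is no genuine obstacle here: the result follows purely from the geometric (cross-product) structure of the Landau--Lifshitz--Gilbert right-hand side. The only point worth emphasizing is that the conservation is insensitive to the specific form of the effective field. In particular, although $\mathbf{h}_A$ in \eqref{equ:effectiveA} contains the antiferromagnetic coupling term $-\delta\mathbf{m}_B$, this plays no role in the argument, since the orthogonality $\mathbf{m}_A\perp(\mathbf{m}_A\times\mathbf{h}_A)$ holds for an \emph{arbitrary} field $\mathbf{h}_A$. Thus the coupling between the two sublattices does not spoil the individual length constraints, and $|\mathbf{m}_B|$ is conserved by the identical computation.
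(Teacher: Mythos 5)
Your proof is correct and is precisely the standard argument the paper has in mind when it says the lemma ``is easy to check'' from \eqref{equ:LLG2_dimensionless}: dotting the equation with $\mathbf{m}_\lambda$ and using that $\mathbf{m}_\lambda\cdot(\mathbf{m}_\lambda\times\mathbf{v})=0$ for any $\mathbf{v}$ gives $\partial_t\abs{\mathbf{m}_\lambda}^2=0$ pointwise. Your remark that the conservation holds for an arbitrary effective field, so the antiferromagnetic coupling $-\delta\mathbf{m}_B$ is harmless, is exactly the right observation and matches the paper's intent.
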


The antiferromagnetic exchange \eqref{AFM-exchange} plays an important role in AFMs. Consider the case when $\delta\rightarrow\infty$
and the system \eqref{equ:LLG2_dimensionless} reduces to
\begin{equation*}
\left\{
\begin{aligned}
  \frac{\partial\mathbf{m}_A}{\partial t} &= \delta\mathbf{m}_A\times\mathbf{m}_B
+\alpha\delta\mathbf{m}_A\times(\mathbf{m}_A\times\mathbf{m}_B), \\
\frac{\partial\mathbf{m}_B}{\partial t} &= \delta\mathbf{m}_B\times\mathbf{m}_A
+\alpha\delta\mathbf{m}_B\times(\mathbf{m}_B\times\mathbf{m}_A).
\end{aligned}
\right.
\end{equation*}
Combining the above two equations with $\mathbf{m} = (\mathbf{m}_A + \mathbf{m}_B)/2$, we get an equation of Bernoulli type
\begin{equation}
  \label{equ:ODE}
  \frac{\partial\mathbf{m}}{\partial t} = \alpha\delta\left(2|\mathbf{m}|^2-1\right)\mathbf{m} - \alpha\delta\mathbf{m}.
\end{equation}

\begin{lemma}\label{lem:exchange}
  $\forall \mathbf{x}\in\Omega$, if the antiferromagnetic exchange parameter $\delta>0$, the system favors the antiferromagnetic state, and if $\delta<0$,
  the system favors the ferromagnetic state.
\end{lemma}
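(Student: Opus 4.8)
The plan is to treat the reduced Bernoulli-type equation \eqref{equ:ODE} as an autonomous dynamical system for the averaged magnetization $\mathbf{m} = (\mathbf{m}_A+\mathbf{m}_B)/2$ and to read off the favored configuration from the stability of its equilibria. First I would collect the two terms proportional to $\mathbf{m}$ and rewrite \eqref{equ:ODE} in the simplified form
\[
\partial_t\mathbf{m} = 2\alpha\delta\left(\abs{\mathbf{m}}^2-1\right)\mathbf{m}.
\]

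Next I would make the correspondence between $\abs{\mathbf{m}}$ and the two physical phases explicit. Since each sublattice satisfies $\abs{\mathbf{m}_A}=\abs{\mathbf{m}_B}=1$ (the magnitude being preserved in time by the preceding lemma), the polarization identity gives $\abs{\mathbf{m}}^2 = \tfrac14\abs{\mathbf{m}_A+\mathbf{m}_B}^2 = \tfrac12\left(1+\mathbf{m}_A\cdot\mathbf{m}_B\right)$, so $\abs{\mathbf{m}}\in[0,1]$ and the endpoints encode exactly the two states: $\abs{\mathbf{m}}=0$ iff $\mathbf{m}_A=-\mathbf{m}_B$ (antiferromagnetic), and $\abs{\mathbf{m}}=1$ iff $\mathbf{m}_A=\mathbf{m}_B$ (ferromagnetic). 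This reduces the claim to determining which endpoint the scalar $\abs{\mathbf{m}}$ is driven toward.

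To carry this out, I would take the inner product of the simplified equation with $\mathbf{m}$ to obtain a closed scalar ODE for $u:=\abs{\mathbf{m}}^2$,
\[
\dot{u} = 2\,\mathbf{m}\cdot\partial_t\mathbf{m} = 4\alpha\delta\,u(u-1),
\]
whose equilibria on $[0,1]$ are $u=0$ (AFM) and $u=1$ (FM). Because $\alpha>0$ and $u(u-1)<0$ on the open interval $(0,1)$, the sign of $\dot u$ there equals the sign of $-\delta$: for $\delta>0$, $u$ decreases monotonically to $0$, so the antiferromagnetic state is the attractor, whereas for $\delta<0$, $u$ increases to $1$, so the ferromagnetic state is the attractor. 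The same conclusion follows by linearizing $f(u)=4\alpha\delta\,u(u-1)$, since $f'(0)=-4\alpha\delta$ and $f'(1)=4\alpha\delta$ show that $\delta>0$ makes $u=0$ stable and $u=1$ unstable, with the roles reversed when $\delta<0$. This establishes the lemma.

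I do not expect a genuine analytic obstacle: the argument is a one-dimensional phase-line analysis once the problem has been projected onto $u$. The only points that require care are the bookkeeping that legitimizes the reduction to a scalar equation — verifying that $u\in[0,1]$ is invariant and that the endpoints faithfully represent the AFM and FM configurations — and the standing sign assumption $\alpha>0$, without which the stability conclusions would not hold.
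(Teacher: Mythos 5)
Your proposal is correct and is essentially the paper's argument: both start from \eqref{equ:ODE}, simplify it to $\partial_t\mathbf{m} = 2\alpha\delta\left(\abs{\mathbf{m}}^2-1\right)\mathbf{m}$, pass to the scalar quantity $u=\abs{\mathbf{m}}^2$, and use the identity $u=\tfrac12\left(1+\mathbf{m}_A\cdot\mathbf{m}_B\right)$ (valid because the preceding lemma preserves $\abs{\mathbf{m}_A}=\abs{\mathbf{m}_B}=1$) to identify $u=0$ with $\mathbf{m}_A=-\mathbf{m}_B$ and $u=1$ with $\mathbf{m}_A=\mathbf{m}_B$. The only divergence is the endgame: the paper integrates the Bernoulli equation explicitly, obtaining $\abs{\mathbf{m}}^2=\left(1-C\exp(4\alpha\delta t)\right)^{-1}$, and sends $t\to\infty$, whereas you perform a phase-line/linearization analysis of $\dot u = 4\alpha\delta\,u(u-1)$ on $[0,1]$. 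Your qualitative route is marginally more robust: it uses only $\alpha>0$ and the sign of $u(u-1)$ on $(0,1)$, and it sidesteps the bookkeeping of the integration constant --- indeed, as printed in the paper, taking $C>0$ in the explicit formula would force $u<0$ for $\delta>0$ at large $t$; consistency with $0<u<1$ requires the denominator $1+C\exp(4\alpha\delta t)$ with $C>0$ (equivalently $C<0$ in the paper's convention), so the qualitative argument avoids a sign slip the paper actually commits. One small point to make explicit in your write-up: to upgrade ``$u$ is monotone and bounded'' to $u\to 0$ (resp.\ $u\to 1$), note that the limit of a bounded monotone solution of an autonomous scalar ODE must be an equilibrium, and $0$ and $1$ are the only equilibria in $[0,1]$ --- a one-line addition, not a gap.
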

\begin{proof}
The analytic solution of \eqref{equ:ODE} is
\[
|\mathbf{m}|^2 = \frac 1{1 - C\exp(4\alpha\delta t)}
\]
with $C$ a positive constant determined by the initial condition.

Therefore, when $\delta>0$, $t\rightarrow\infty$, we have $|\mathbf{m}|^2\rightarrow 0$, $\forall \mathbf{x}\in\Omega$.
When $\delta<0$, $t\rightarrow\infty$, we have $|\mathbf{m}|^2\rightarrow 1$, $\forall \mathbf{x}\in\Omega$.

The definition of $\mathbf{m}$ yields
\begin{align*}
|\mathbf{m}|^2 &= \left|\frac {\mathbf{m}_A+\mathbf{m}_B}{2}\right|^2 \\
&= \frac1{4}\left(|\mathbf{m}_A|^2 + |\mathbf{m}_B|^2 + 2(\mathbf{m}_A, \mathbf{m}_B)\right)\\
&= \frac 1{2}(1 + (\mathbf{m}_A, \mathbf{m}_B))\\
&= \frac 1{2}(1 + \cos(\mathbf{m}_A, \mathbf{m}_B))
\end{align*}

As a consequence, $\mathbf{m}_A =  -\mathbf{m}_B$ and $\mathbf{m}_A =  \mathbf{m}_B$ for $\delta>0$ and $\delta<0$, respectively.
\end{proof}

\Cref{lem:exchange} implies that the coupled system convergences to an antiferromagnetic state exponentially fast with the exponent proportational to the
antiferromagnetic exchange parameter $\delta$. This is indicated numerically by the energy decay in \Cref{energy}.


\section{Gauss-Seidel projection methods for antiferromagnetics}\label{sec:schemes}

In this section, we introduce three Gauss-Seidel projection methods for \eqref{equ:LLG2_dimensionless}.
The finite difference method is used for spatial discretization with unknowns
$\mathbf{m}_{\lambda}(i) = \mathbf{m}_{\lambda}((i-\frac 1{2})\Delta x)$ in 1D and
$\mathbf{m}_{\lambda}(i,j,k) = \mathbf{m}_{\lambda}((i-\frac 1{2})\Delta x, (j-\frac 1{2})\Delta y, (k-\frac 1{2})\Delta z)$ in 3D,
where $i=0,1,\cdots, M, M+1$, $j = 0,1,\cdots, N, N+1$, and $k = 0,1,\cdots, K, K+1$ and $M, N, K$ represent the number of segments
for each direction.

\subsection{Original Gauss-Seidel Projection Method}\label{subsec:GSPM}

This is a direct generalization of the original GSPM \cite{NumGSPM2001} to the antiferromagnetic or ferrimagnetic case.
For \eqref{equ:LLG2_dimensionless}, the GSPM works as follows.
Define the vector field for the splitting procedure:
\begin{align*}
  \mathbf{h}_A = \epsilon\Delta\mathbf{m}_A + \mathbf{\hat f}_A,\\
  \mathbf{h}_B = \epsilon\Delta\mathbf{m}_B + \mathbf{\hat f}_B,
\end{align*}
where $\mathbf{\hat f}_A = -Q(m_{A2}\mathbf{e}_2 + m_{A3}\mathbf{e}_3) - \delta\mathbf{m}_B + \mathbf{h}_e$ and $\mathbf{\hat f}_B = -Q(m_{B2}\mathbf{e}_2 + m_{B3}\mathbf{e}_3)-\delta\mathbf{m}_A + \mathbf{h}_e$. The GSPM solves \eqref{equ:LLG2_dimensionless} in three steps:
\begin{itemize}
\item Implicit Gauss-Seidel:
\begin{align*}
{g_A}_i^n & = (I - \Delta t\epsilon\Delta_h)^{-1}({m_A}_i^n + \Delta t{\hat f}_{Ai}^n),\ \ i = 2,3,\\
{g_A}_i^* & = (I - \Delta t\epsilon\Delta_h)^{-1}({m_A}_i^* + \Delta t{\hat f}_{Ai}^*),\ \ i = 1,2,\\
\begin{pmatrix}
{m_A}_1^* \\{m_A}_2^* \\{m_A}_3^*
\end{pmatrix} & =
\begin{pmatrix}
{m_A}_1^n + ({g_A}_2^n{m_A}_3^n - {g_A}_3^n{m_A}_2^n) \\
{m_A}_2^n + ({g_A}_3^n{m_A}_1^* - {g_A}_1^*{m_A}_3^n) \\
{m_A}_3^n + ({g_A}_1^*{m_A}_2^* - {g_A}_2^*{m_A}_1^*)
\end{pmatrix}, \\
  \mathbf{\hat f}_B^* & = -Q(m_{B2}\mathbf{e}_2 + m_{B3}\mathbf{e}_3)-\delta\mathbf{m}_A^* + \mathbf{h}_e, \\
{g_B}_i^n & = (I - \Delta t\epsilon\Delta_h)^{-1}({m_B}_i^n + \Delta t{\hat f}_{Bi}^*),\ \ i = 2,3, \\
{g_B}_i^* & = (I - \Delta t\epsilon\Delta_h)^{-1}({m_B}_i^* + \Delta t{\hat f}_{Bi}^*),\ \ i = 1,2, \\
\begin{pmatrix}
{m_B}_1^* \\{m_B}_2^* \\{m_B}_3^*
\end{pmatrix} & =
\begin{pmatrix}
{m_B}_1^n + ({g_B}_2^n{m_B}_3^n - {g_B}_3^n{m_B}_2^n) \\
{m_B}_2^n + ({g_B}_3^n{m_B}_1^* - {g_B}_1^*{m_B}_3^n) \\
{m_B}_3^n + ({g_B}_1^*{m_B}_2^* - {g_B}_2^*{m_B}_1^*)
\end{pmatrix},
\end{align*}
\begin{align*}
  \mathbf{\hat f}_A^* &= -Q(m_{A2}^*\mathbf{e}_2 + m_{A3}^*\mathbf{e}_3)-\delta\mathbf{m}_B^* + \mathbf{h}_e, \nonumber\\
  \mathbf{\hat f}_B^{**} &= -Q(m_{B2}^*\mathbf{e}_2 + m_{B3}^*\mathbf{e}_3)-\delta\mathbf{m}_A^{*} + \mathbf{h}_e.
\end{align*}
\item Heat flow without constraints:
\begin{align*}
\begin{pmatrix}
{m_A}_1^{**} \\{m_A}_2^{**} \\{m_A}_3^{**}
\end{pmatrix} &=
\begin{pmatrix}
{m_A}_1^{*} + \alpha\Delta t(\epsilon\Delta_h{m_A}_1^{**} + {\hat f}_{A1}^*) \\
{m_A}_2^{*} + \alpha\Delta t(\epsilon\Delta_h{m_A}_2^{**} + {\hat f}_{A2}^*) \\
{m_A}_3^{*} + \alpha\Delta t(\epsilon\Delta_h{m_A}_3^{**} + {\hat f}_{A3}^*)
\end{pmatrix},\\
\begin{pmatrix}
{m_B}_1^{**} \\{m_B}_2^{**} \\{m_B}_3^{**}
\end{pmatrix} &=
\begin{pmatrix}
{m_B}_1^{*} + \alpha\Delta ts^2(\epsilon\Delta_h{m_B}_1^{**} + {\hat f}_{B1}^{**}) \\
{m_B}_2^{*} + \alpha\Delta ts^2(\epsilon\Delta_h{m_B}_2^{**} + {\hat f}_{B2}^{**}) \\
{m_B}_3^{*} + \alpha\Delta ts^2(\epsilon\Delta_h{m_B}_3^{**} + {\hat f}_{B3}^{**})
\end{pmatrix}.
\end{align*}
\item Projection onto $S^2$:
\begin{align*}
\begin{pmatrix}
{m_A}_1^{n+1} \\{m_A}_2^{n+1} \\{m_A}_3^{n+1}
\end{pmatrix} = \frac 1{|\mathbf{m}_A^{**}|}
\begin{pmatrix}
{m_A}_1^{**} \\{m_A}_2^{**} \\{m_A}_3^{**}
\end{pmatrix},\\
\begin{pmatrix}
{m_B}_1^{n+1} \\{m_B}_2^{n+1} \\{m_B}_3^{n+1}
\end{pmatrix} = \frac s{|\mathbf{m}_B^{**}|}
\begin{pmatrix}
{m_B}_1^{**} \\{m_B}_2^{**} \\{m_B}_3^{**}
\end{pmatrix}.
\end{align*}
\end{itemize}
Note that $0<s\leqslant1$ within the definition $|\mathbf{m}_B| = s$.
$s=1$ is for AFMs and $s<1$ is for ferrimagnetics.

\subsection{Scheme A}\label{subsec:schemeA}
Both Scheme A and Scheme B are based on improved GSPMs for ferromagnetics \cite{panchi2019GSPM}.
In Scheme A, we do not treat the gyromagnetic term and the damping term separately. Instead, the implicit Gauss-Seidel method is applied to the gyromagnetic term and the damping term simultaneously, and a projection step follows up.
\begin{itemize}
\item Implicit Gauss-Seidel step:
\begin{align*}
g_{Ai}^n &= (I-\Delta t\epsilon\Delta)^{-1}(m_{Ai}^n+\Delta t\hat{f}_{Ai}^n),\ \ i = 1,2,3,\\
g_{Ai}^* &= (I-\Delta t\epsilon\Delta)^{-1}(m_{Ai}^*+\Delta t\hat{f}_{Ai}^*),\ \ i = 1,2,\\
\end{align*}
\begin{align*}
m_{A1}^* &= m_{A1}^n -(m_{A2}^ng_{A3}^n - m_{A3}^ng_{A2}^n)-\alpha(m_{A1}^ng_{A1}^n + m_{A2}^ng_{A2}^n + m_{A3}^ng_{A3}^n)m_{A1}^n + \alpha g_{A1}^n,\\
m_{A2}^* &= m_{A2}^n -(m_{A3}^ng_{A1}^* - m_{A1}^*g_{A3}^n)-\alpha(m_{A1}^*g_{A1}^* + m_{A2}^ng_{A2}^n + m_{A3}^ng_{A3}^n)m_{A2}^n + \alpha g_{A2}^n,\\
m_{A3}^* &= m_{A3}^n -(m_{A1}^*g_{A2}^* - m_{A2}^*g_{A1}^*)-\alpha(m_{A1}^*g_{A1}^* + m_{A2}^*g_{A2}^* +
m_{A3}^ng_{A3}^n)m_{A3}^n + \alpha g_{A3}^n,
\end{align*}
\begin{align*}
  g_{Bi}^n &= (I-\Delta t\epsilon\Delta)^{-1}(m_{Bi}^n+\Delta t\hat{f}_{Bi}^*),\ \ i = 1,2,3,\\
  g_{Bi}^* &= (I-\Delta t\epsilon\Delta)^{-1}(m_{Bi}^*+\Delta t\hat{f}_{Bi}^*),\ \ i = 1,2,
\end{align*}
\begin{align*}
m_{B1}^* &= m_{B1}^n -(m_{B2}^ng_{B3}^n - m_{B3}^ng_{B2}^n)-\alpha(m_{B1}^ng_{B1}^n + m_{B2}^ng_{B2}^n + m_{B3}^ng_{B3}^n)m_{B1}^n + \alpha s^2g_{B1}^n,\\
m_{B2}^* &= m_{B2}^n -(m_{B3}^ng_{B1}^* - m_{B1}^*g_{B3}^n)-\alpha(m_{B1}^*g_{B1}^* + m_{B2}^ng_{B2}^n + m_{B3}^ng_{B3}^n)m_{B2}^n + \alpha s^2g_{B2}^n,\\
m_{B3}^* &= m_{B3}^n -(m_{B1}^*g_{B2}^* - m_{B2}^*g_{B1}^*)-\alpha(m_{B1}^*g_{B1}^* + m_{B2}^*g_{B2}^* + m_{B3}^ng_{B3}^n)m_{B3}^n + \alpha s^2g_{B3}^n.
\end{align*}
\item Projection step:
\begin{align*}
\begin{pmatrix}
{m_A}_1^{n+1} \\{m_A}_2^{n+1} \\{m_A}_3^{n+1}
\end{pmatrix} = \frac 1{|\mathbf{m}_A^{*}|}
\begin{pmatrix}
{m_A}_1^{*} \\{m_A}_2^{*} \\{m_A}_3^{*}
\end{pmatrix},\\
\begin{pmatrix}
{m_B}_1^{n+1} \\{m_B}_2^{n+1} \\{m_B}_3^{n+1}
\end{pmatrix} = \frac s{|\mathbf{m}_B^{*}|}
\begin{pmatrix}
{m_B}_1^{*} \\{m_B}_2^{*} \\{m_B}_3^{*}
\end{pmatrix}.
\end{align*}
\end{itemize}
\subsection{Scheme B}\label{subsec:schemeB}

Scheme B reduces the computational cost further by the introduction of two sets of approximations.
At each step, one set of solution is updated in the implicit Gauss-Seidel step and
the other is updated in the projection step.
\begin{itemize}
\item Implicit Gauss-Seidel step:
\begin{align*}
g_{Ai}^{n+1} &= (I-\Delta t\epsilon\Delta_h)^{-1}(m_{Ai}^*+\Delta t\hat{f}_{Ai}^*),\ \ i = 1,2,3,
\end{align*}
\begin{align*}
m_{A1}^* = &m_{A1}^n -(m_{A2}^ng_{A3}^n - m_{A3}^ng_{A2}^n)-\alpha(m_{A1}^ng_{A1}^n + m_{A2}^ng_{A2}^n + m_{A3}^ng_{A3}^n)m_{A1}^n + \\
&\alpha((m_{A1}^n)^2 + (m_{A2}^n)^2 + (m_{A3}^n)^2)g_{A1}^n,\\
m_{A2}^* = &m_{A2}^n -(m_{A3}^ng_{A1}^{*} - m_{A1}^{*}g_{A3}^n)-\alpha(m_{A1}^{*}g_{A1}^{n+1} + m_{A2}^ng_{A2}^n + m_{A3}^ng_{A3}^n)m_{A2}^n + \\
&\alpha((m_{A1}^{*})^2 + (m_{A2}^n)^2 + (m_{A3}^n)^2)g_{A2}^n,\\
m_{A3}^* = &m_{A3}^n -(m_{A1}^{*}g_{A2}^{n+1} - m_{A2}^{*}g_{A1}^{n+1})-\alpha(m_{A1}^{*}g_{A1}^{n+1} + m_{A2}^{*}g_{A2}^{n+1} + m_{A3}^ng_{A3}^n)m_{A3}^n + \\
&\alpha((m_{A1}^{*})^2 + (m_{A2}^{*})^2 + (m_{A3}^n)^2)g_{A3}^n,
\end{align*}
\begin{align*}
  g_{Bi}^{n+1} &= (I-\Delta t\epsilon\Delta_h)^{-1}(m_{Bi}^*+\Delta t\hat{f}_{Bi}^*),\ \ i = 1,2,3,
\end{align*}
\begin{align*}
m_{B1}^* = &m_{B1}^n -(m_{B2}^ng_{B3}^n - m_{B3}^ng_{B2}^n)-\alpha(m_{B1}^ng_{B1}^n + m_{B2}^ng_{B2}^n + m_{B3}^ng_{B3}^n)m_{B1}^n + \\
&\alpha((m_{B1}^n)^2 + (m_{B2}^n)^2 + (m_{B3}^n)^2)g_{B1}^n,\\
m_{B2}^*= &m_{B2}^n -(m_{B3}^ng_{B1}^{n+1} - m_{B1}^{*}g_{B3}^n)-\alpha(m_{B1}^{*}g_{B1}^{n+1} + m_{B2}^ng_{B2}^n + m_{B3}^ng_{B3}^n)m_{B2}^n + \\
&\alpha((m_{B1}^{*})^2 + (m_{B2}^n)^2 + (m_{B3}^n)^2)g_{B2}^n,\\
m_{B3}^*= &m_{B3}^n -(m_{B1}^{*}g_{B2}^{n+1} - m_{B2}^{*}g_{B1}^{n+1})-\alpha(m_{B1}^{*}g_{B1}^{n+1} + m_{B2}^{*}g_{B2}^{n+1} + m_{B3}^ng_{B3}^n)m_{B3}^n + \\
&\alpha((m_{B1}^{*})^2 + (m_{B2}^{*})^2 + (m_{B3}^n)^2)g_{B3}^n.
\end{align*}

\item Projection step:
\begin{align*}
\begin{pmatrix}
{m_A}_1^{n+1} \\{m_A}_2^{n+1} \\{m_A}_3^{n+1}
\end{pmatrix} = \frac 1{|\mathbf{m}_A^{*}|}
\begin{pmatrix}
{m_A}_1^{*} \\{m_A}_2^{*} \\{m_A}_3^{*}
\end{pmatrix},\\
\begin{pmatrix}
{m_B}_1^{n+1} \\{m_B}_2^{n+1} \\{m_B}_3^{n+1}
\end{pmatrix} = \frac s{|\mathbf{m}_B^{*}|}
\begin{pmatrix}
{m_B}_1^{*} \\{m_B}_2^{*} \\{m_B}_3^{*}
\end{pmatrix}.
\end{align*}
\end{itemize}

The above three GSPMs for AFMs and ferrimagntics have different computational complexity originating from the number of linear systems of
equations with constant coefficients to be solved at each step. For comparison, we list the number of linear systems to be solved at each step in \Cref{tab:effective}. It is easy to see that the computational complexity of GSPM for AFMs is only doubled compared to that for FMs.
\begin{table}[htbp]
  \centering
  \caption{Number of linear systems of equations to be solved at each step for three GSPMs.}
	\begin{tabular}{c|c|c}
		\hline
		Scheme & Number of linear systems for AFMs & Number of linear systems for FMs \\
		\hline
		GSPM & $14$ & $7$\\
		Scheme A & $10$ & $5$\\
		Scheme B & $6$ & $3$ \\
		\hline
	\end{tabular}
    \label{tab:effective}
\end{table}
\section{Accuracy check}\label{sec:experiments}

In this section, by a series of examples in both 1D and 3D, we show the accuracy of GSPMs
for AFMs and ferrimagnets. For convenience, the model used here is
\begin{equation}
\left\{
\begin{aligned}
\frac{\partial\mathbf{m}_A}{\partial t} &= -\mathbf{m}_A\times\mathbf{h}_A
-\alpha\mathbf{m}_A\times(\mathbf{m}_A\times\mathbf{h}_A) + \mathbf{f}_A\\
\frac{\partial\mathbf{m}_B}{\partial t} &= -\mathbf{m}_B\times\mathbf{h}_B
-\alpha\mathbf{m}_B\times(\mathbf{m}_B\times\mathbf{h}_B) + \mathbf{f}_B
\end{aligned}
\right.  \label{equ:for_accuracy_test}
\end{equation}
with $\mathbf{h}_A = \Delta\mathbf{m}_A + \delta\mathbf{m}_B$, $\mathbf{h}_B = \Delta\mathbf{m}_B + \delta\mathbf{m}_A$,
and $\mathbf{f}_{\lambda}$ are forcing terms specified by exact solutions.

\begin{example}[1D]\label{eg:1daccuracy}
  Consider a set of orthogonal solutions for \eqref{equ:for_accuracy_test} in $\Omega=[0,1]$:
  \begin{align*}
    \mathbf{m}_A &= (\cos(x^2(1-x)^2)\sin(t), \sin(x^2(1-x)^2)\sin(t), \cos(t)), \nonumber\\
    \mathbf{m}_B &= (s\cos(x^2(1-x)^2)\cos(t), s\sin(x^2(1-x)^2)\cos(t), -s\sin(t)).
  \end{align*}
 Parameters are $\alpha = 0.1$, $\delta = 2.0$ and $T=1.0e-03$. The error is defined $\|\mathbf{m}_e - \mathbf{m}_h\|_{\infty}$
 with $\mathbf{m}_h$ being the numerical solutions and $\mathbf{m}_e$ being the exact solution.
 The accuracy of GSPMs is $O(\Delta t + \Delta x^2)$ as shown in \Cref{table:1DTemperoaltable,table:1DSpatialtable}.
  \begin{table}[htbp]
    \centering
    \caption{Accuracy with respect to the temporal step size in 1D ($\Delta x = 0.001$).}
    \begin{tabular}{c|c|c|c|c|c|c}
      \hline
      $\Delta t$ & \tabincell{c}{GSPM\\(s = 1.0)} & \tabincell{c}{GSPM\\(s = 0.8)} & \tabincell{c}{Scheme A\\(s = 1.0)} & \tabincell{c}{Scheme A\\(s = 0.8)} & \tabincell{c}{Scheme B\\(s = 1.0)} & \tabincell{c}{Scheme B\\(s = 0.8)} \\
      \hline
      T/1000 & 2.7388e-07 & 2.0377e-07 & 3.2269e-07 & 2.2854e-07 & 3.2270e-07 & 2.2854e-07 \\
      T/500 & 5.3937e-07 & 4.0141e-07 & 6.3259e-07 & 4.4798e-07 & 6.3260e-07 & 4.4798e-07\\
      T/250 & 1.0740e-06 & 7.9971e-07 & 1.2525e-06 & 8.8680e-07 & 1.2525e-06 & 8.8680e-07 \\
      T/125 & 2.1541e-06 & 1.6055e-06 & 2.4925e-06 & 1.7643e-06 & 2.4925e-06 & 1.7642e-06\\
      \hline
      order & 0.99 & 0.99 & 0.98 & 0.98 & 0.98 & 0.98\\
      \hline
    \end{tabular}
    \label{table:1DTemperoaltable}
  \end{table}

  \begin{table}[htbp]
    \centering
    \caption{Accuracy with respect to the spatial step size in 1D ($\Delta t = 1.0e-09$).}
    \begin{tabular}{c|c|c|c|c|c|c}
      \hline
      $\Delta x$ & \tabincell{c}{GSPM\\(s = 1.0)} & \tabincell{c}{GSPM\\(s = 0.8)} & \tabincell{c}{Scheme A\\(s = 1.0)} & \tabincell{c}{Scheme A\\(s = 0.8)} & \tabincell{c}{Scheme B\\(s = 1.0)} & \tabincell{c}{Scheme B\\(s = 0.8)} \\
      \hline
      0.001 & 1.3063e-08 & 9.2823e-09 & 1.3110e-08 & 9.3084e-09 & 1.3121e-08 & 9.3120e-09 \\
      0.002 & 5.0500e-08 & 3.5764e-08 & 5.0546e-08 & 3.5790e-08 & 5.0557e-08 & 3.5794e-08 \\
      0.004 & 1.9341e-07 & 1.3623e-07 & 1.9345e-07 & 1.3626e-07 & 1.9346e-07 & 1.3626e-07 \\
      0.008 & 7.1142e-07 & 4.9543e-07 & 7.1146e-07 & 4.9545e-07 & 7.1147e-07 & 4.9545e-07 \\
      \hline
      order & 1.92 & 1.92 & 1.92 & 1.91 & 1.91 & 1.91\\
      \hline
    \end{tabular}
    \label{table:1DSpatialtable}
  \end{table}
\end{example}

\begin{example}[3D]\label{eg:3daccuracy}
  As in the 1D case, we use a set of orthogonal solutions for \eqref{equ:for_accuracy_test} in $\Omega=[0,1]\times[0,1]\times[0,1]$ for the spatial accuracy and $\Omega=[0,0.2]\times[0,0.1]\times[0, 0.02]$ with a $64\times32\times5$ mesh for the temporal accuracy
  \begin{align*}
    \mathbf{m}_A &= (\cos({\bar x\bar y\bar z})\sin(t), \sin({\bar x\bar y\bar z})\sin(t), \cos(t)), \nonumber\\
    \mathbf{m}_B &= (s\cos({\bar x\bar y\bar z})\cos(t), s\sin({\bar x\bar y\bar z})\cos(t), -s\sin(t))
  \end{align*}
  with ${\bar x} = x^2(1-x)^2$, ${\bar y} = y^2(1-y)^2$, ${\bar z} = z^2(1-z)^2$. Parameters are $\alpha = 0.1$ and $\delta = 2.0$.
  The first order accuracy in time and the second order accuracy in space are shown in \Cref{table:3DTemperoaltable,table:3DSpatialtable}.

  \begin{table}[htbp]
    \centering
    \caption{Accuracy with respect to the temporal step size in 3D ($\Delta x = 0.001$ and $T = 1.0e-06$).}
    \begin{tabular}{c|c|c|c|c|c|c}
      \hline
      $\Delta t$ & \tabincell{c}{GSPM\\(s = 1.0)} & \tabincell{c}{GSPM\\(s = 0.8)} & \tabincell{c}{Scheme A\\(s = 1.0)} & \tabincell{c}{Scheme A\\(s = 0.8)} & \tabincell{c}{Scheme B\\(s = 1.0)} & \tabincell{c}{Scheme B\\(s = 0.8)} \\
      \hline
      T/25 & 4.0001e-08 & 4.0000e-08 & 4.0001e-08 & 4.0000e-08 & 4.0001e-08 & 4.0000e-08 \\
      T/50 & 2.0001e-08 & 2.0000e-08 & 2.0001e-08 & 2.0000e-08 & 2.0001e-08 & 2.0000e-08\\
      T/100 & 1.0001e-08 & 1.0000e-08 & 1.0001e-08 & 1.0000e-08 & 1.0001e-08 & 1.0000e-08 \\
      T/200 & 5.0012e-09 & 5.0000e-09 & 5.0012e-09 & 5.0000e-09 & 5.0012e-09 & 5.0000e-09\\
      \hline
      order & 1.00 & 1.00 & 1.00 & 1.00 & 1.00 & 1.00\\
      \hline
    \end{tabular}
    \label{table:3DTemperoaltable}
  \end{table}

  \begin{table}[htbp]
    \centering
    \caption{Accuracy with respect to spatial step size in 3D ($\Delta t = 1.0e-09$ and $T = 1.0e-04$).}
    \begin{tabular}{c|c|c|c|c|c|c}
      \hline
      $\Delta x$ & \tabincell{c}{GSPM\\(s = 1.0)} & \tabincell{c}{GSPM\\(s = 0.8)} & \tabincell{c}{Scheme A\\(s = 1.0)} & \tabincell{c}{Scheme A\\(s = 0.8)} & \tabincell{c}{Scheme B\\(s = 1.0)} & \tabincell{c}{Scheme B\\(s = 0.8)} \\
      \hline
      1/6 & 5.9150e-08 & 3.8020e-08 & 5.9150e-08 & 3.8020e-08 & 5.9150e-08 & 3.8020e-08 \\
      1/8 & 3.5375e-08 & 2.2803e-08 & 3.5375e-08 & 2.2803e-08 & 3.5375e-08 & 2.2803e-08\\
      1/10 & 2.3507e-08 & 1.5206e-08 & 2.3507e-08 & 1.5206e-08 & 2.3507e-08 & 1.5206e-08 \\
      1/12 & 1.6824e-08 & 1.0928e-08 & 1.6824e-08 & 1.0928e-08 & 1.6824e-08 & 1.0928e-08\\
      \hline
      order & 1.81 & 1.80 & 1.81 & 1.80 & 1.81 & 1.80\\
      \hline
    \end{tabular}
    \label{table:3DSpatialtable}
  \end{table}
\end{example}

\begin{example}[Accuracy with respect to $\delta$]\label{eg:3dAFMparameter}
Using the same setup as in \Cref{eg:3daccuracy}, we also show that GSPMs work well for the antiferromagnetic exchange parameter $\delta$
over a wide range of values in \cref{Unconditionalstablewith_delta}.

 \begin{figure}[htbp]
    \centering
    \subfloat[Time]{\label{temporalwithdelta}\includegraphics[width=2.5in]{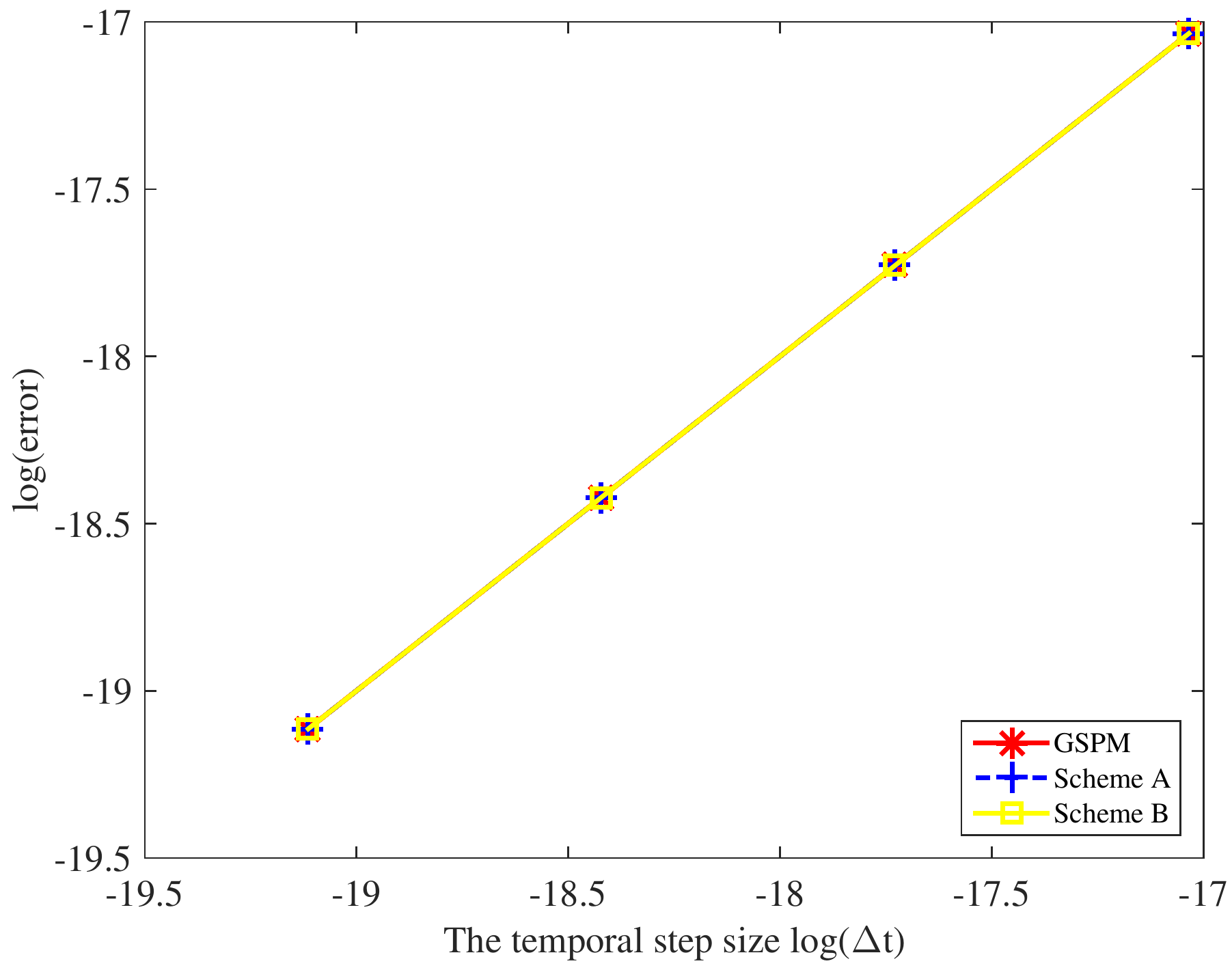}}
    \subfloat[Space]{\label{spatialwithdelta}\includegraphics[width=2.5in]{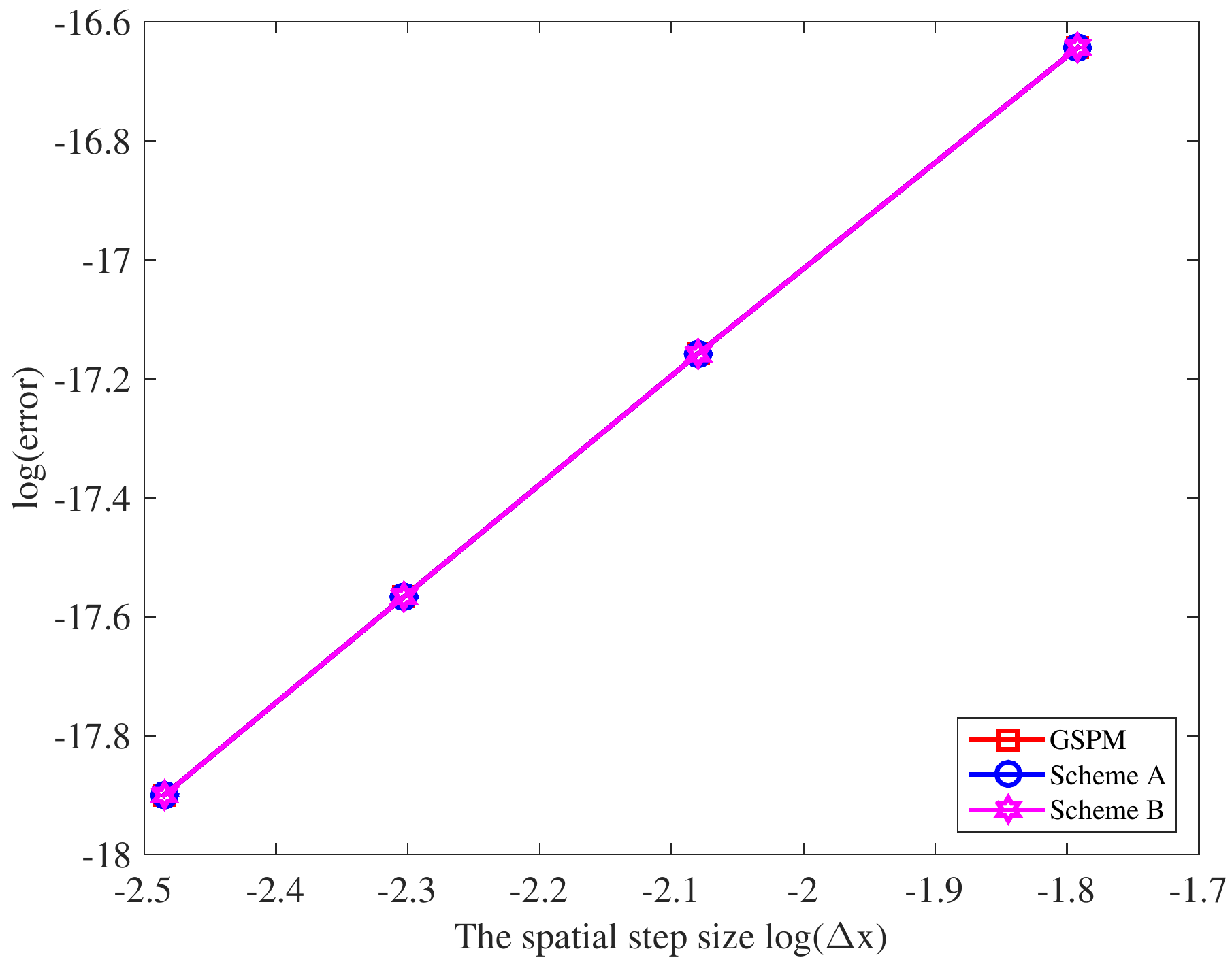}}
    \caption{Performance of three GSPMs with respect to the antiferromagnetic exchange parameter $\delta$ in both time and space.
    The antiferromagnetic exchange parameter $\delta = \pm0.1, \pm5.0, \pm100.0$.}
    \label{Unconditionalstablewith_delta}
  \end{figure}
\end{example}

\section{Femtosecond magnetization dynamics in antiferromagnets}\label{sec:simulaions}

The presence of the antiferromagnetic exchange term makes magnetization dynamics richer compared with ferromagnetic counterparts.
In the simulations, the size of the thin film material is $100\;\mathrm{nm}\times100\;\mathrm{nm}\times10\;\mathrm{nm}$ and the
grid size is $2\;\mathrm{nm}\times2\;\mathrm{nm}\times2\;\mathrm{nm}$.
Physical parameters \cite{Puliafito2019Model} are listed in \cref{table:physicalparameter}. For the given parameters, we have $\delta/q = 240$ and
$\delta/\epsilon = 4.8\times10^4$. Therefore for micromagnetics simulation in AFMs, the temporal stepsize is limited by
the antiferromagnetic exchange parameter.

\begin{table}[htbp]
\centering
\caption{Physical parameters used in the simulation.}
  \begin{tabular}{|c|c|c|}
    \hline
    parameters & value & unit \\
    \hline
    \hline
    $\alpha$ & $0.05$ & $-$\\
    \hline
    $a$ & $0.5\times10^{-9}$ & m \\
    \hline
    $M_s$ & $4.0\times10^5$ & $\mathrm{A}/\mathrm{m}$ \\
    \hline
    $K_U$ & $1.0\times10^5$ & $\mathrm{J}/\mathrm{m}^3$\\
    \hline
    $A$ & $5.0\times10^{-12}$ & $\mathrm{J}/\mathrm{m}$\\
    \hline
    $A_{AFM}$ & $\pm3.0\times10^{-12}$ & $\mathrm{J}/\mathrm{m}$ \\
    \hline
    $\gamma$ & $1.76\times10^{11}$ & $(\mathrm{Ts})^{-1}$\\
    \hline
  \end{tabular}
  \label{table:physicalparameter}
\end{table}

\subsection{Antiferromagnetic/ferromagnetic stable phase}\label{subsec:phase}

In this case, we choose the initial state as $\mathbf{m}_A = (1, 0, 0)^T$, and $\mathbf{m}_B = (0, 1, 0)^T$,
and solve \eqref{equ:LLG2_dimensionless}.
Note that the easy-axis direction is the $x\textrm{-}axis$. In the absence of an external field, the system
shall relax to a ferromagnetic phase or an antiferromagnetic phase, depending on the sign of the antiferromagnetic
exchange parameter $\delta$. Since the temporal stepsize is limited by the antiferromagnetic exchange parameter,
we observe that all GSPMs lose their accuracy if the stepsize is much larger than the femtosecond scale. More
seriously, the numerical solution of Scheme B blows up with stepsize $\Delta t = 1\;\textrm{ps}$.
Therefore, in what follows, we set $\Delta t = 1\;\textrm{fs}$ to get reliable numerical solutions.

There are two characteristic time scales of the magnetization dynamics in AFMs when the external filed
is applied. The first one is due to the antiferromagnetic exchange,
which yields an intermediate antiferromagetic/ferromagnetic state at $\sim 1\;\textrm{ps}$. The second one is due to the
magnetic anisotropy, which yields a stable antiferromagnetic/ferromagnetic states at $\sim 10\;\textrm{ps}$ or longer.
\cref{AFMstablewith_delta} visualizes an intermediate antiferromagnetic state at $2\;\textrm{ps}$
and an stable antiferromagnetic state at $6\;\textrm{ps}$ when the antiferromagnetic exchange parameter $A_{AFM} = 3.0\times10^{-12}\;\textrm{J}/\textrm{m}$.
\cref{FMstablewith_delta} visualizes an intermediate ferromagnetic state at $2\;\textrm{ps}$
and a stable antiferromagnetic state at $500\;\textrm{ps}$ when $A_{AFM} = -3.0\times10^{-12}\;\textrm{J}/\textrm{m}$.

\begin{figure}[htbp]
    \centering
    \subfloat[Intermediate state]{\includegraphics[width=2.5in]{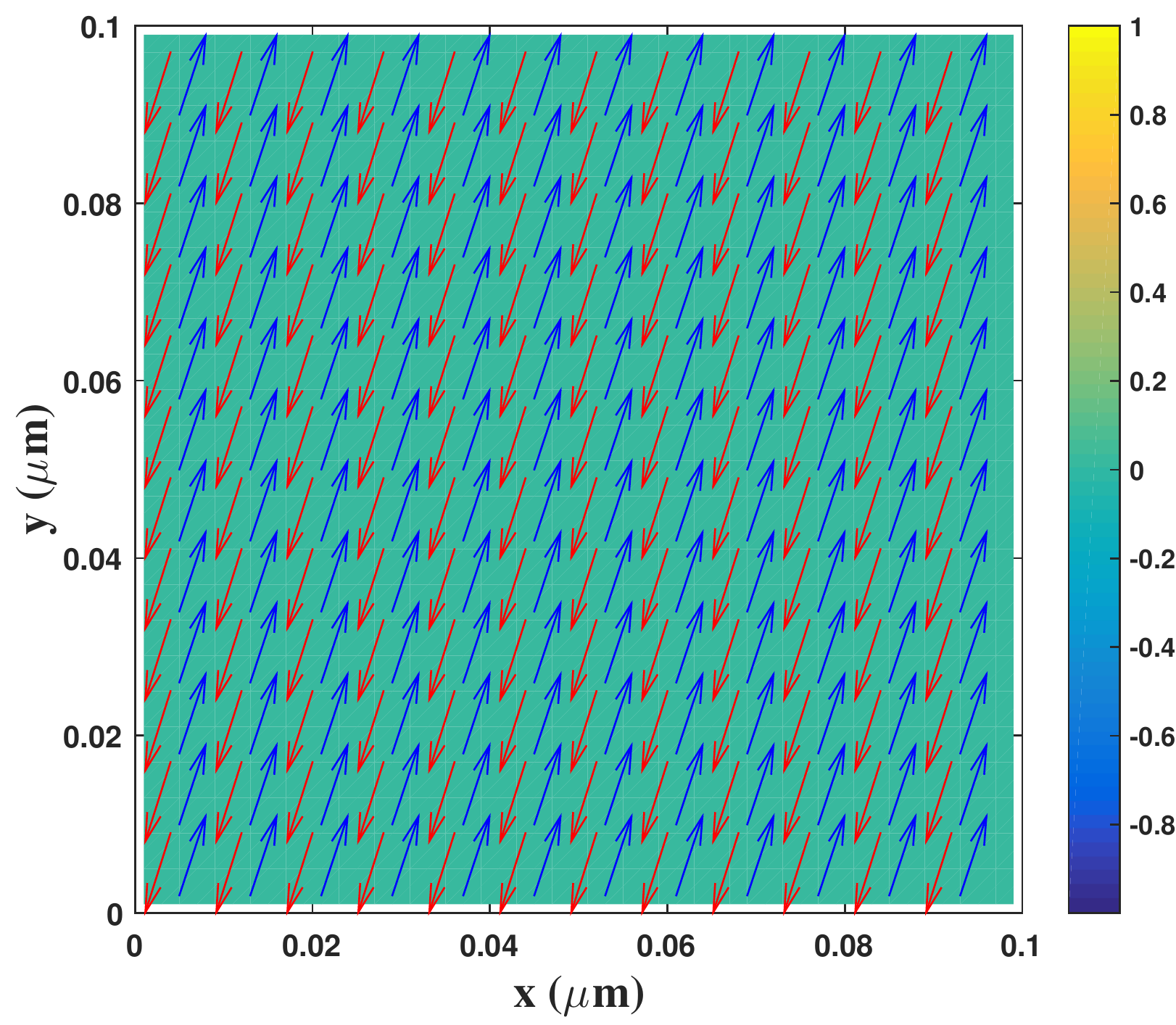}}
    \subfloat[Antiferromagnetic stable state]{\includegraphics[width=2.5in]{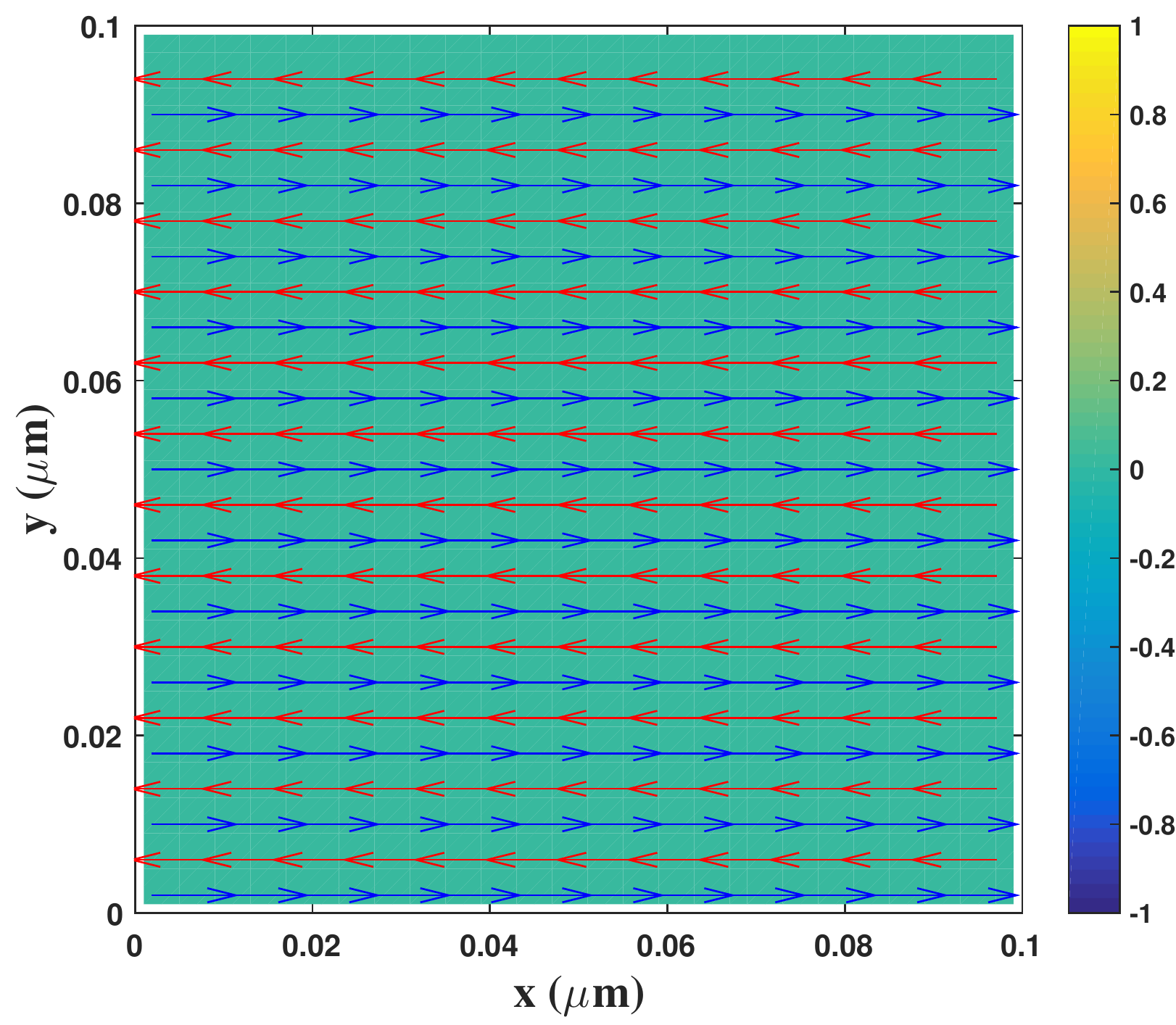}}
    \caption{Arrow plot of the magnetization in the centered slice along the $z$ direction. Arrows are plotted by the first two components of the magnetization and the color is plotted by the third component $(m_{A3}+m_{B3})/2$. We set $\Delta t = 1\;\textrm{fs}$ and $T = 40\;\textrm{ps}$ although it stabilizes at $6\;\textrm{ps}$. (a) An intermediate antiferromagnetic state; (b) Stable antiferromagnetic state.}
    \label{AFMstablewith_delta}
\end{figure}
\begin{figure}[htbp]
    \centering
    \subfloat[Intermediate state]{\includegraphics[width=2.5in]{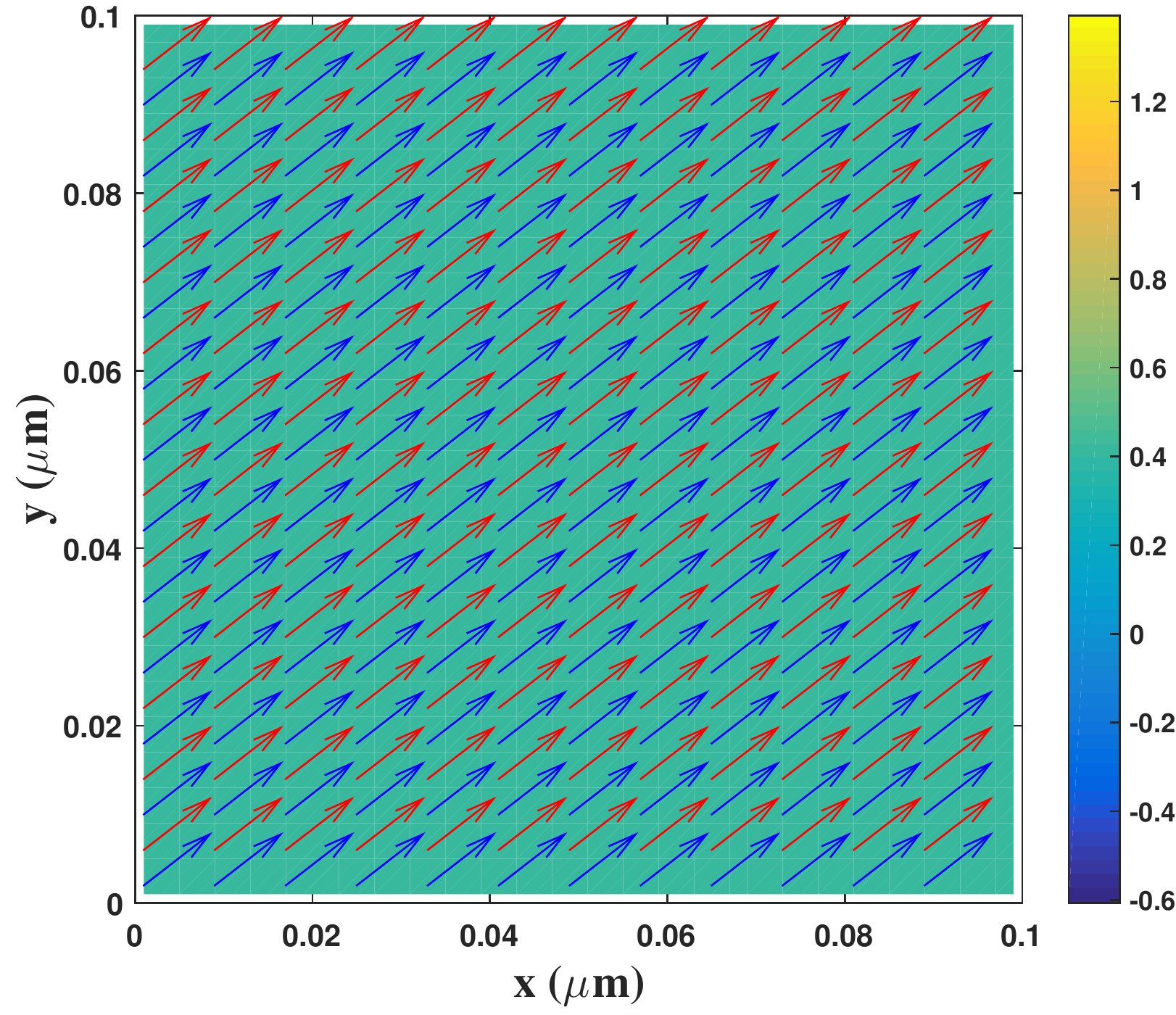}}
    \subfloat[Ferromagnetic stable state]{\label{FMstable}\includegraphics[width=2.5in]{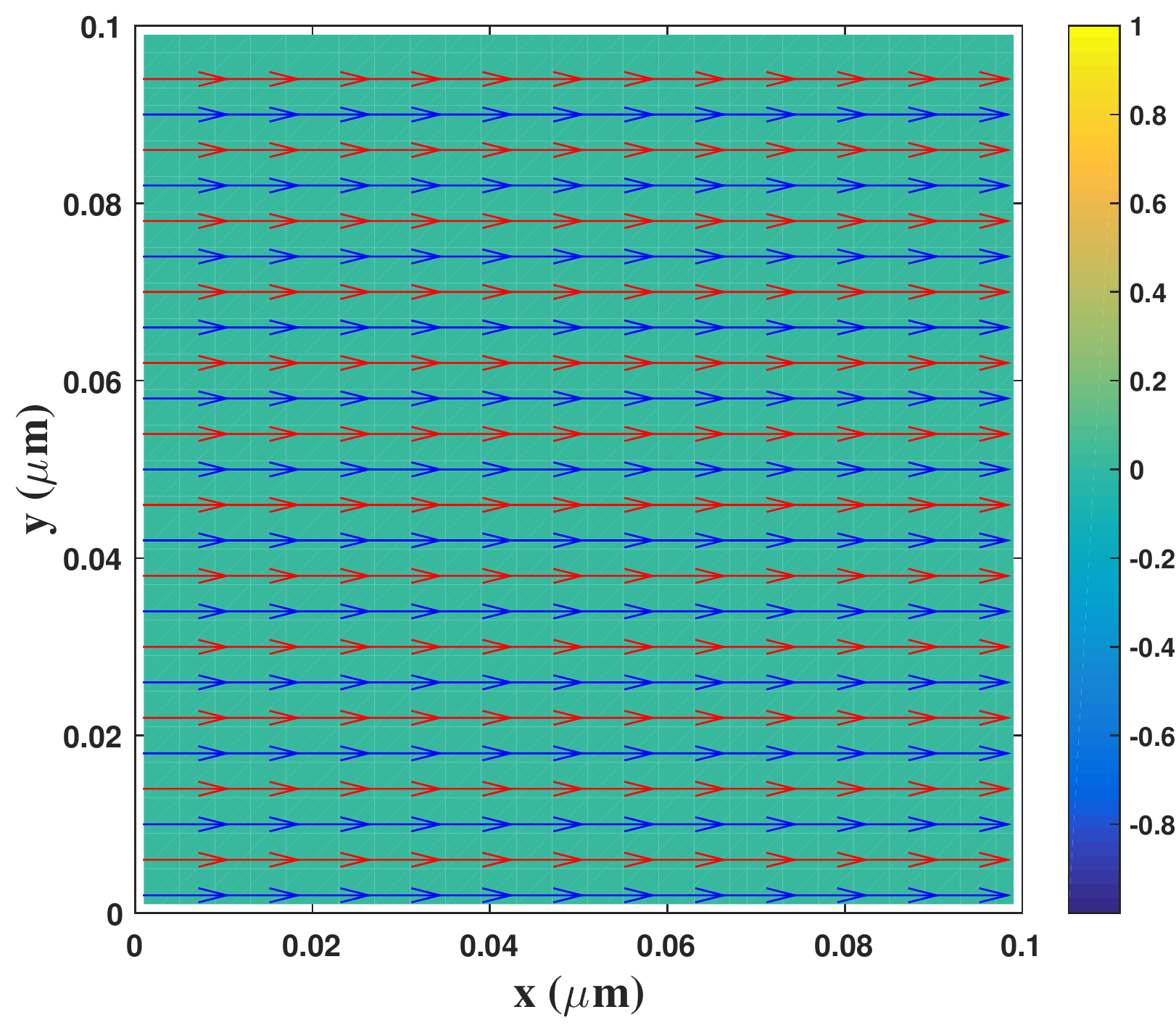}}
    \caption{Arrow plot of the magnetization in the centered slice along the $z$ direction. Arrows are plotted by the first two components of the magnetization	and the color is plotted by the third component $(m_{A3}+m_{B3})/2$. We set $\Delta t = 1\;\textrm{fs}$ and $T = 2\;\textrm{ns}$ although it stabilizes at $500\;\textrm{ps}$. (a) An intermediate antiferromagnetic state; (b) Stable antiferromagnetic state.}
    \label{FMstablewith_delta}
\end{figure}

In addition, we record the system energy with respect to time for $A_{AFM} = 3.0\times10^{-12}\;\textrm{J}/\textrm{m}$ (antiferromagnetism),
$A_{AFM} = -3.0\times10^{-12}\;\textrm{J}/\textrm{m}$ (ferromagnetism), and $A_{AFM} = 0$ (no coupling) in \Cref{energy} with random initial conditions for $\m_{A}$ and $\m_{B}$. As mentioned above, two distinct time scales are observed for nonzero antiferromagnetic exchange parameter
while the shorter time scale is missing if $A_{AFM} = 0$.
Although the energy in \Cref{energy} seems to be saturated, the stable ferromagnetic state has not been obtained yet. It actually takes longer to achieve
the stable state as shown in \Cref{FMstable}.

\begin{figure}[htbp]
  \centering
  \includegraphics[width=6in]{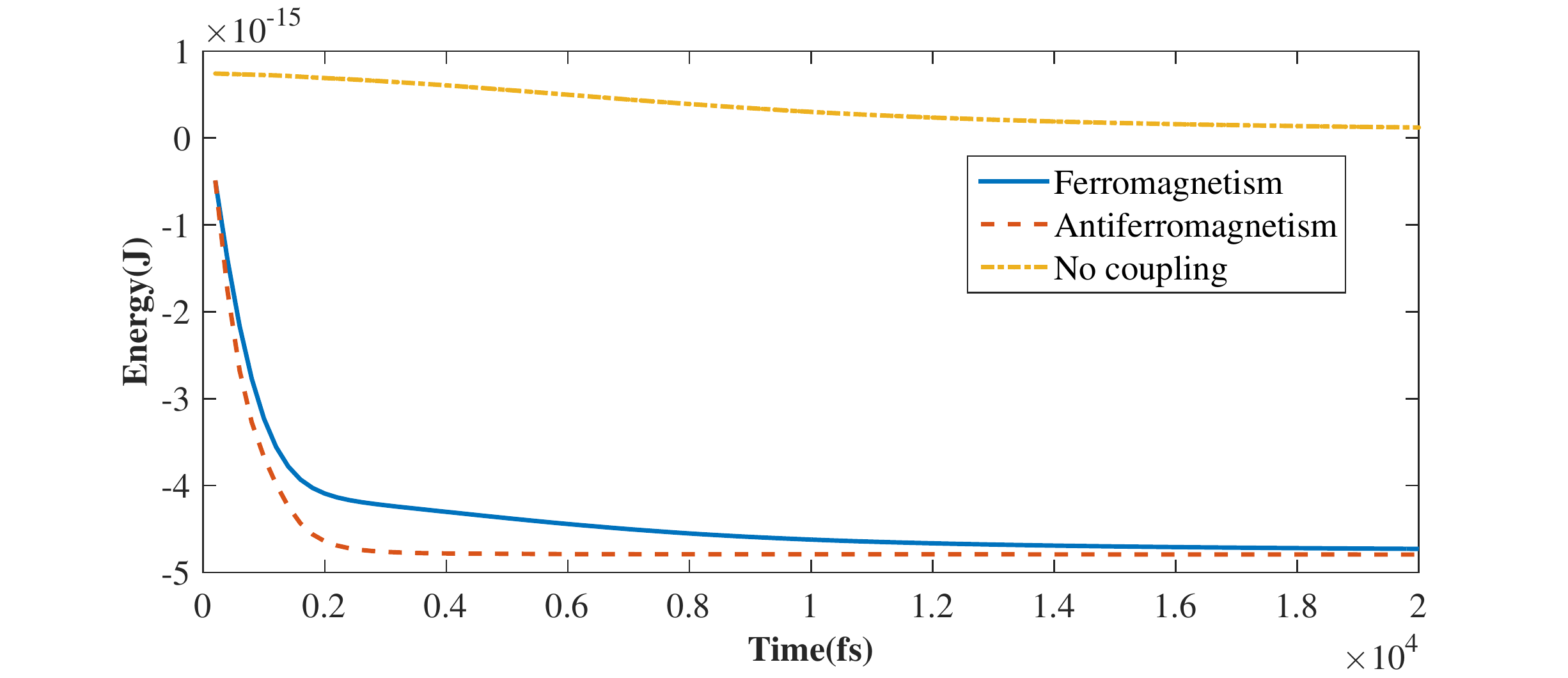}
  \caption{The system energy as a function of time for random initial conditions when $A_{AFM} = 3\times10^{-12}$ (antiferromagnetism),
  	$A_{AFM} = -3\times10^{-12}$ (ferromagnetism), and $A_{AFM} = 0$ (no coupling) with $\Delta t = 1\;\textrm{fs}$ and $T = 20\;\textrm{ps}$.}
  \label{energy}
\end{figure}

\subsection{N\'eel wall structure in an antiferromagnet}\label{subsec:wall}

Consider an initial state with a N\'eel wall profile as in \cref{initial_phase_wall}. When the antiferromagnetic exchange parameter
$A_{AFM} = -3.0\times10^{-12}$, and $\mathbf{m}_A = \mathbf{m}_B$, it is expected that the system relaxes to a N\'eel wall structure
of ferromagnetic type; see \cref{{FMwallstable}}. When $A_{AFM} = 3.0\times10^{-12}$, the system relaxes to a N\'eel wall structure
of antiferromagnetic type; see \cref{FMwallstable_SchemeB}.

\begin{figure}[htbp]
    \centering
    \subfloat[Initial state]{\label{initial_phase_wall}\includegraphics[width=2.5in]{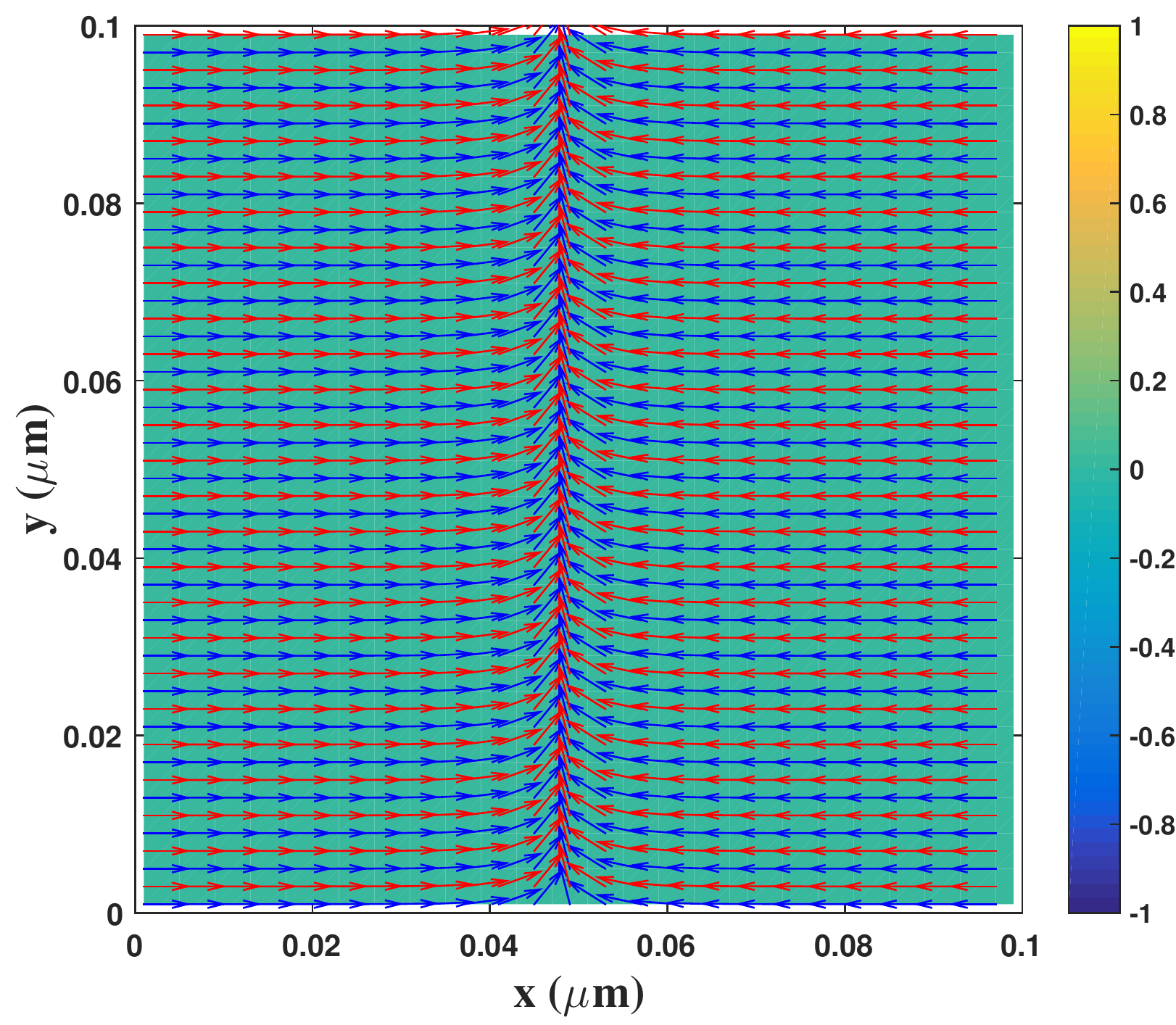}}
    \subfloat[Stable Ferromagnetic state]{\includegraphics[width=2.5in]{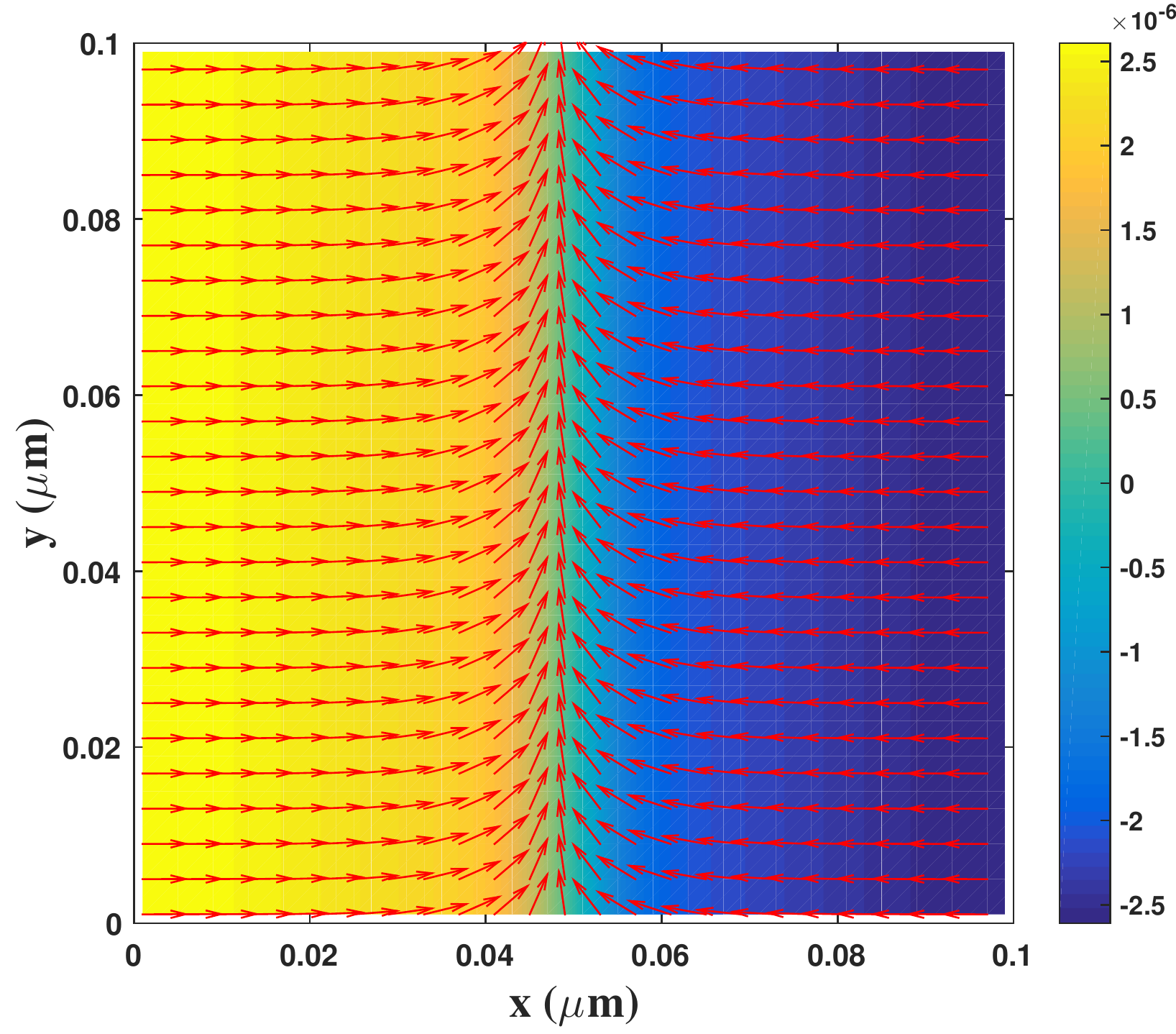}}
    \quad
    \subfloat[Stable state of sublattice A]{\includegraphics[width=2.5in]{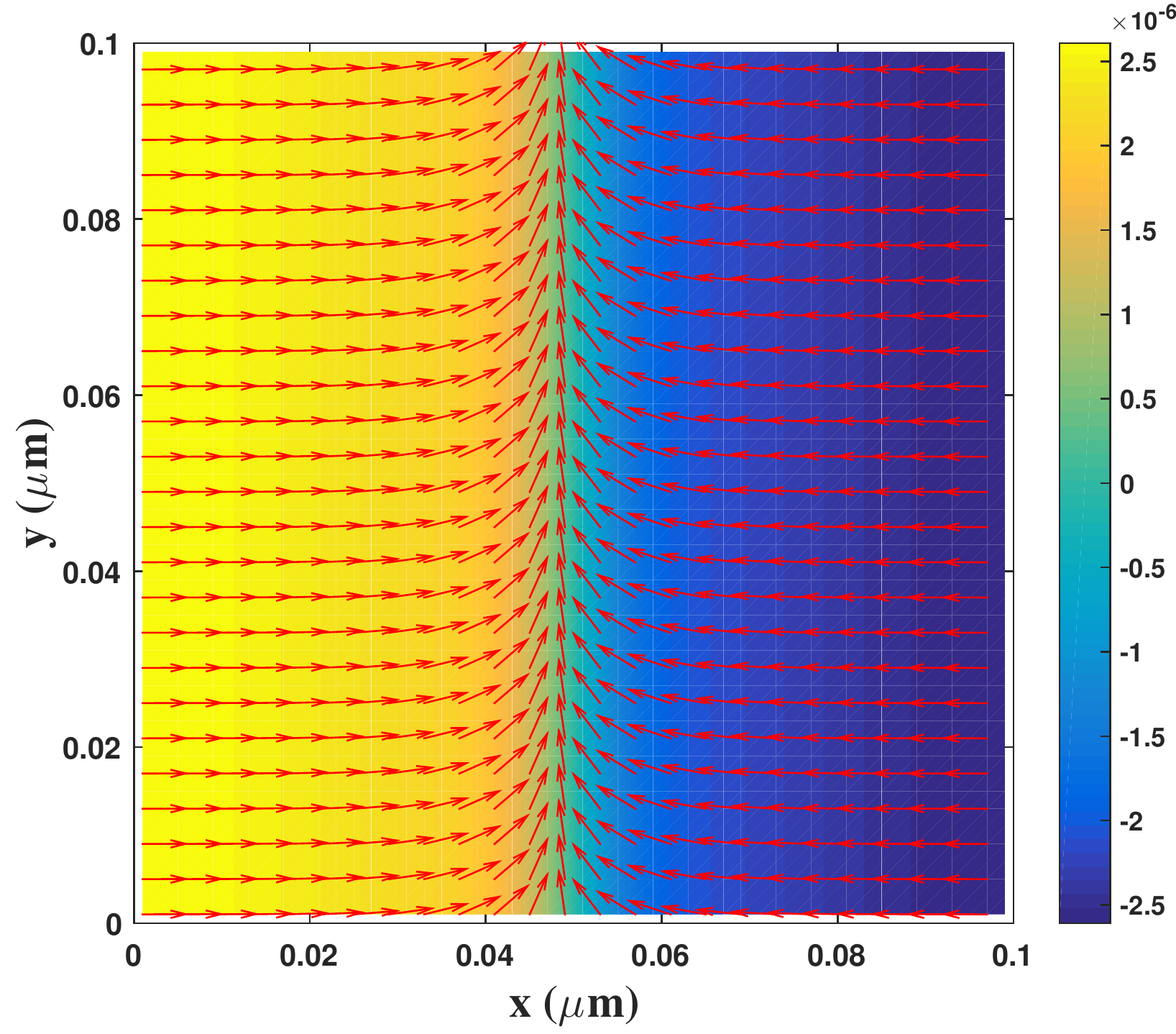}}
    \subfloat[Stable state of sublattice B]{\includegraphics[width=2.5in]{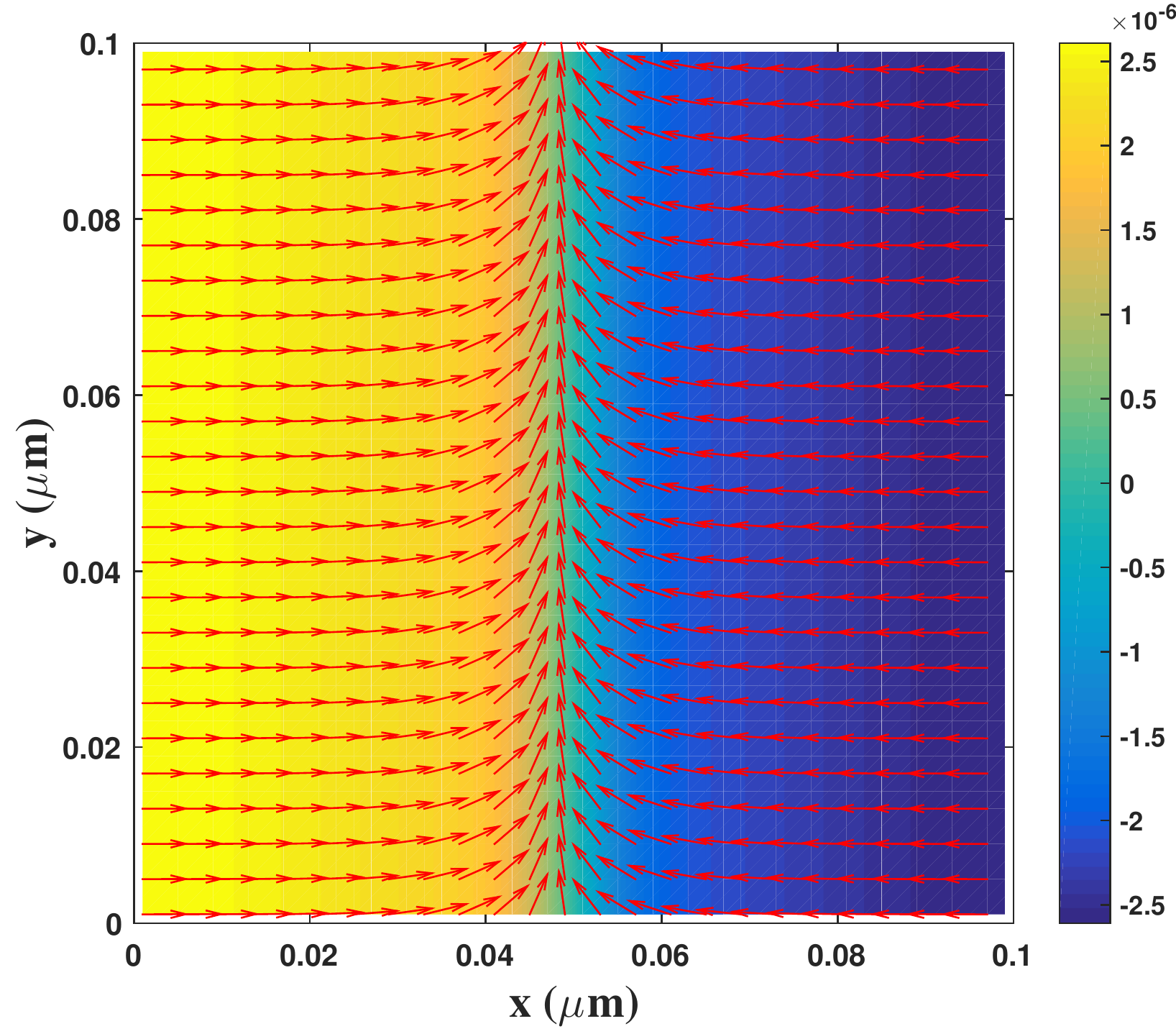}}
    \caption{Magnetization profile of $\mathbf{m} = (\mathbf{m}_A + \mathbf{m}_B)/2$ in the centered slice along the $z$ direction.
    Arrows are plotted by the first two components of the magnetization and the color is plotted by the third component.
    $\Delta t = 1\;\textrm{fs}$ and $T=2\;\textrm{ns}$. (a) Initial state; (b) Stable ferromagnetic state; (c) Stable state of sublattice A;
    (d) Stable state of sublattice B.}
    \label{FMwallstable}
\end{figure}

\begin{figure}[htbp]
    \centering
    \subfloat[$\mathbf{l} = \frac{\mathbf{m}_A - \mathbf{m}_B}{2}$]{\includegraphics[width=2.5in]{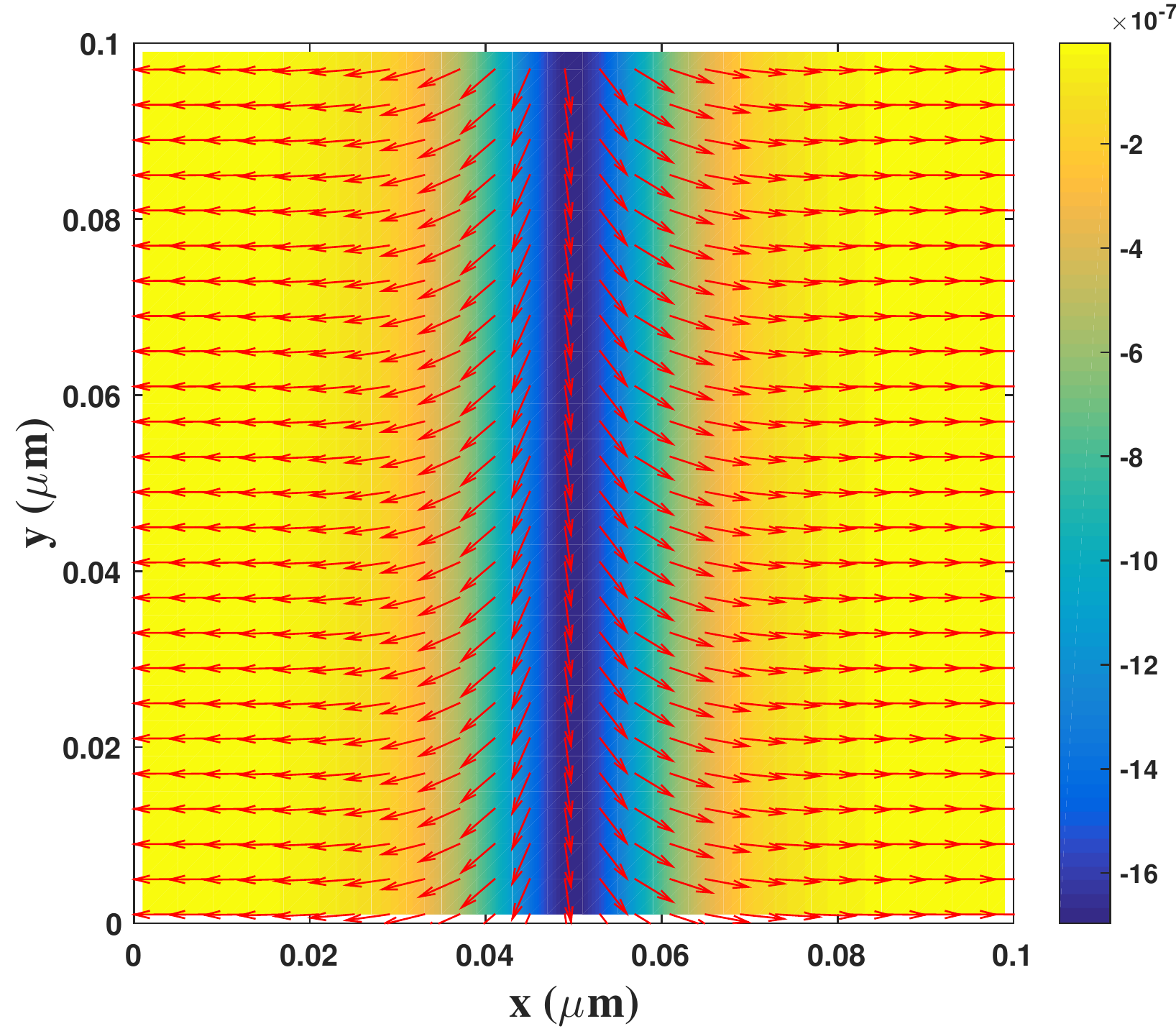}}
    \subfloat[$\mathbf{m} = \frac{\mathbf{m}_A + \mathbf{m}_B}{2}$]{\includegraphics[width=2.5in]{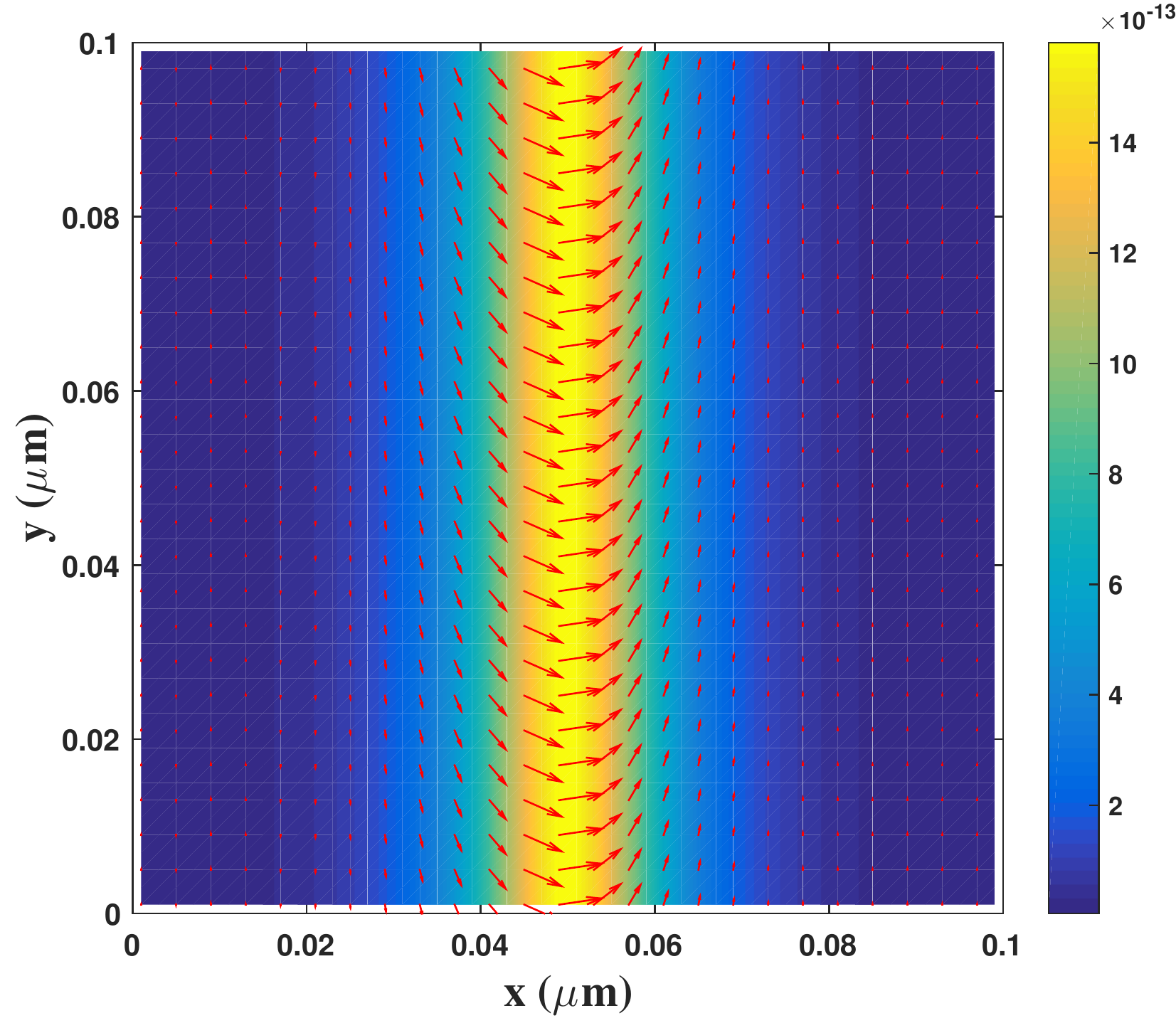}}
    \quad
    \subfloat[Stable state of sublattice A]{\includegraphics[width=2.5in]{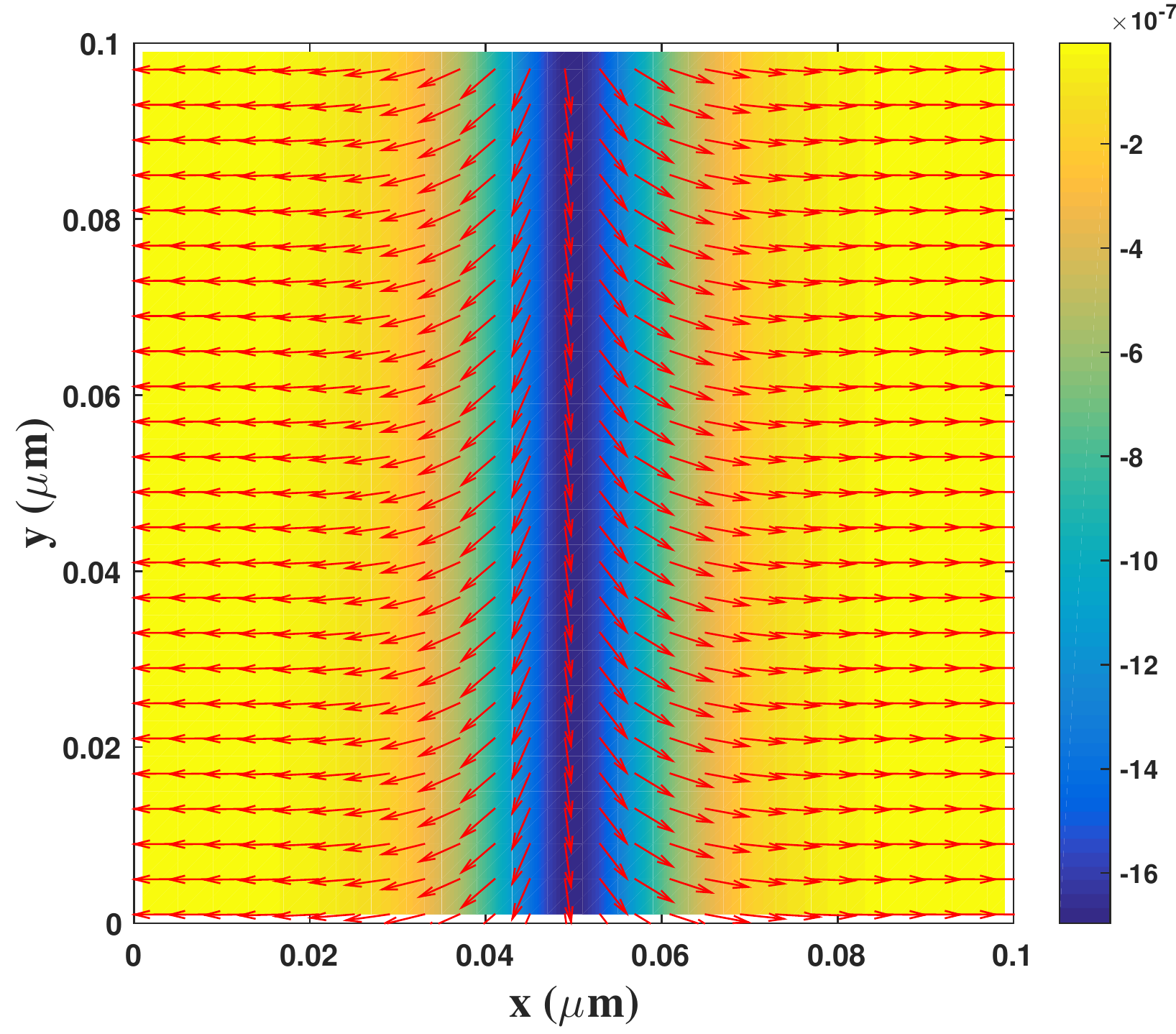}}
    \subfloat[Stable state of sublattice B]{\includegraphics[width=2.5in]{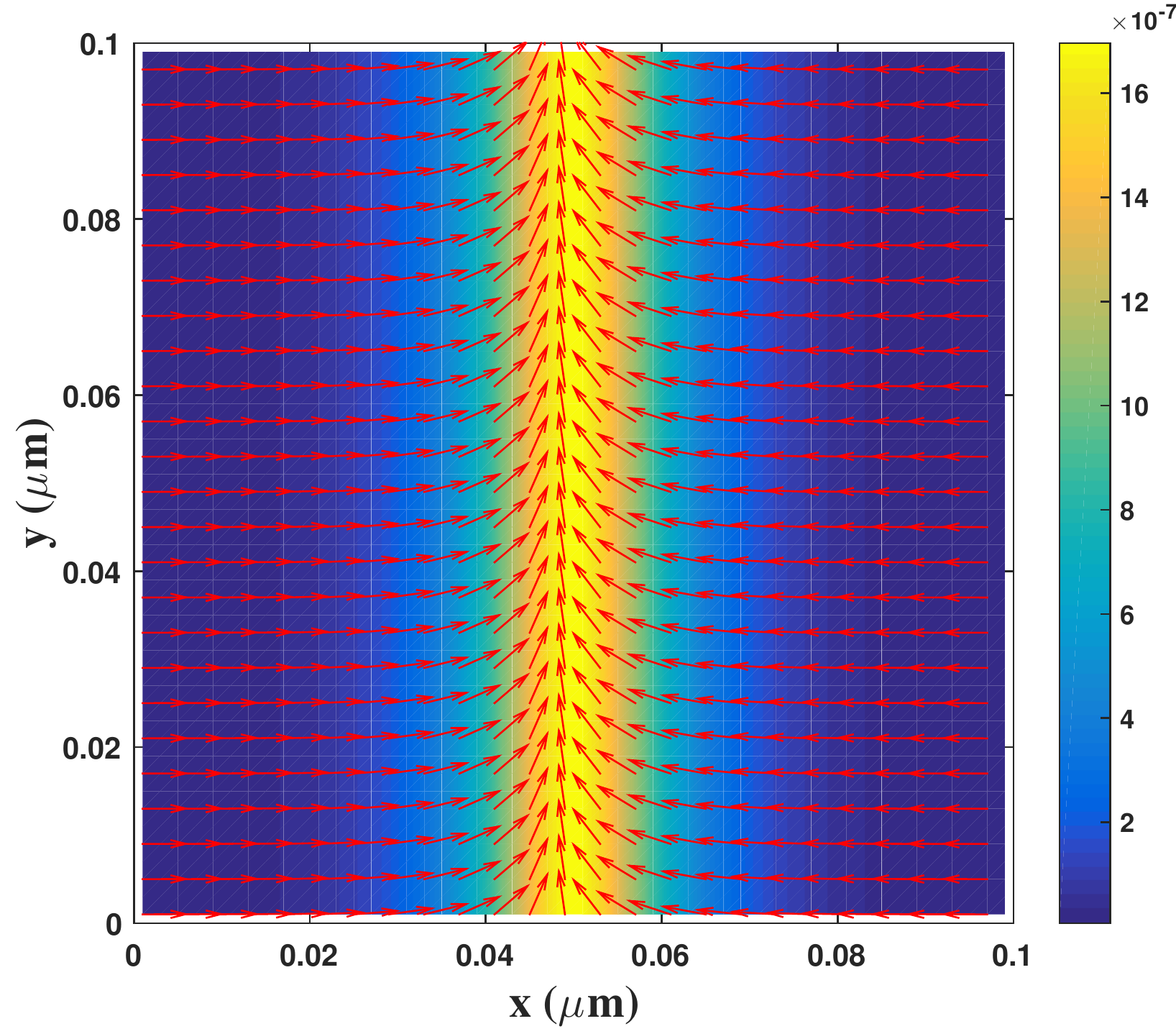}}
    \caption{Magnetization profile of the antiferromagnetic state in the centered slice along the $z$ direction.
    	Arrows are plotted by the first two components of the magnetization and the color is plotted by the third component.
    	$\Delta t = 1\;\textrm{fs}$ and $T=1\;\textrm{ns}$. (a) $\mathbf{l} = (\mathbf{m}_A - \mathbf{m}_B)/2$ at $1\textrm{ns}$;
    	(b) $\mathbf{m} = (\mathbf{m}_A + \mathbf{m}_B)/2$; (c) Sstable state of sublattice A. (d) Stable state of sublattice B.}
    \label{FMwallstable_SchemeB}
\end{figure}

\subsection{Phase diagram under the external field}\label{subsec:diagram}

It is known that AFMs are robust against magnetic perturbation, therefore the phase diagram requires stronger external fields.
The external field can be applied in three different ways. First, when an external field is applied perpendicular to the
easy-axis direction, magnetic moments of two sublattices and hence the magnetization $\mathbf{m} = (\mathbf{m}_A+\mathbf{m}_B)/2$
tend to align up with the field gradually until the net magnetization saturated. The external field is parallel to the easy-axis direction in the second and third cases. When the anisotropy energy is much
smaller than the antiferromagnetic exchange energy, the phase transition is known as the spin flop transition. When the
anisotropy energy is much larger than the antiferromagnetic exchange energy, magnetization jumps from zero to the saturation
magnetization directly, which is known as the spin flip transition \cite{Baltz2018Review,Coey2014Photocopy}.

In \cref{table:physicalparameter}, the anisotropy energy is much smaller that the antiferromagnetic energy, and thus
the spin flop transition happens. To observe the spin flip transition, we reduce the antiferromagnetic exchange parameter
$A_{AFM}$ to $3.0\times10^{-15}\;\mathrm{J}/\mathrm{m}$.
The whole simulation is started with an initial state $\mathbf{m}_A = -\mathbf{m}_B$ along the $x\textrm{-}axis$
and the external applied field $\mathbf{H}_\textrm{ext}=0\;\mathrm{T}\thicksim300\;\mathrm{T}$.
For both the $\mathbf{H}_{\bot}$ and the $\mathbf{H}_{||}$ phase transitions, an external field with $0\;\mathrm{T}$ is applied
and the system relaxes to a stable state. Afterwards, we increase the external field by a certain amount $\Delta\mathbf{H}_\textrm{ext}$
and let the system relax to an stable state. The external field is successively increased until $\mathbf{H}_\textrm{ext}=300\;\mathrm{T}$.
Averaged net magnetization is recorded as a function of $\mathbf{H}_\textrm{ext}$. The stopping criterion for a steady state is that the
relative change of the total energy is less than $10^{-9}$. Results shown in \Cref{phase_diagram} are in qualitative
agreements with the experimental results of MnF$_2$ \cite{Jacobs1961spinflop}.

\begin{figure}[htbp]
  \centering
  \includegraphics[width=5.5in]{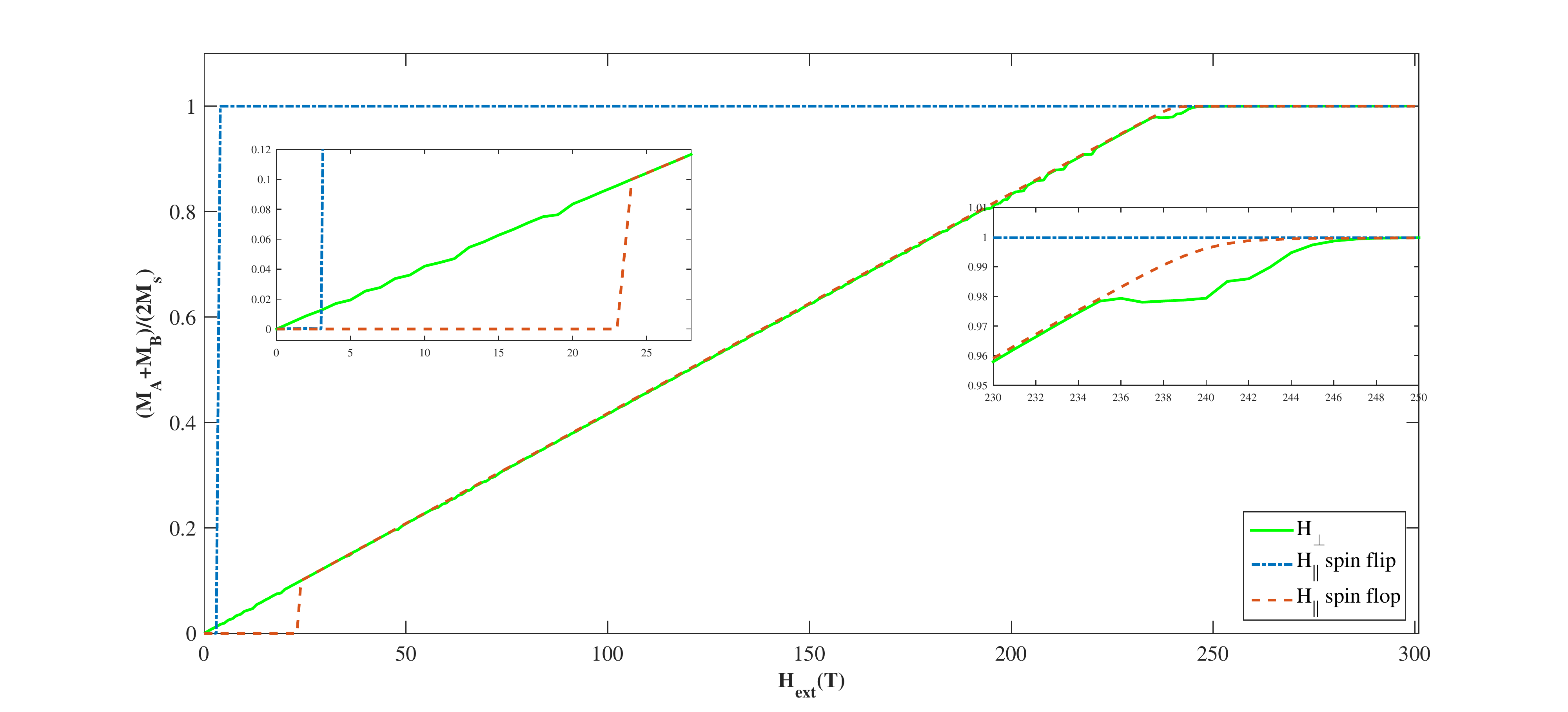}
  \caption{Phase diagram under the external field with parameters in \cref{table:physicalparameter} and $A_{AFM} = 3.0\times10^{-15}$
  	in the case of the spin-flip magnetic field.}\label{phase_diagram}
\end{figure}

\section{Conclusion}\label{sec:conclusion}

In this paper, we develop three Gauss-Seidel projection methods for the coupled system of Landau-Lifshitz-Gilbert equations with
an antiferromagnetic exchange coupling. Performance of these methods is verified in terms of accuracy and stability. In addition,
femtosecond magnetization dynamics, N\'eel wall structure, and phase transition under the external magnetic field in antiferromagnets
are studied. It is interesting that phase transitions observed in the simulation are qualitatively consistent
with the experimental results of MnF$_2$. In addition to experiments, the proposed methods open up an alternative way to understand
femtosecond magnetization dynamics in antiferromagnetic and ferrimagnetic materials.

\section*{Acknowledgments}
This work is supported in part by the grants NSFC 21602149 and 11971021 (J.~Chen), NSFC 11501399 (R.~Du), the Hong Kong Research Grants Council (GRF grants 16302715, 16324416, 16303318 and NSFC-RGC joint research grant N-HKUST620/15) (X.-P.~Wang).

\bibliographystyle{model1-num-names}
\bibliography{refs}

\begin{thebibliography}{29}
\expandafter\ifx\csname natexlab\endcsname\relax\def\natexlab#1{#1}\fi
\providecommand{\url}[1]{\texttt{#1}}
\providecommand{\href}[2]{#2}
\providecommand{\path}[1]{#1}
\providecommand{\DOIprefix}{doi:}
\providecommand{\ArXivprefix}{arXiv:}
\providecommand{\URLprefix}{URL: }
\providecommand{\Pubmedprefix}{pmid:}
\providecommand{\doi}[1]{\href{http://dx.doi.org/#1}{\path{#1}}}
\providecommand{\Pubmed}[1]{\href{pmid:#1}{\path{#1}}}
\providecommand{\bibinfo}[2]{#2}
\ifx\xfnm\relax \def\xfnm[#1]{\unskip,\space#1}\fi
\bibitem[{\ifmmode \check{Z}\else \v{Z}\fi{}uti\ifmmode~\acute{c}\else
  \'{c}\fi{} et~al.(2004)\ifmmode \check{Z}\else
  \v{Z}\fi{}uti\ifmmode~\acute{c}\else \'{c}\fi{}, Fabian, and
  Das~Sarma}]{ZuticFabianDasSarma:2004}
\bibinfo{author}{I.~\ifmmode \check{Z}\else
  \v{Z}\fi{}uti\ifmmode~\acute{c}\else \'{c}\fi{}},
  \bibinfo{author}{J.~Fabian}, \bibinfo{author}{S.~Das~Sarma},
\newblock \bibinfo{title}{Spintronics: Fundamentals and applications},
\newblock \bibinfo{journal}{Rev. Mod. Phys.} \bibinfo{volume}{76}
  (\bibinfo{year}{2004}) \bibinfo{pages}{323--410}.
\bibitem[{Gomonay and Loktev(2014)}]{Gomonay2014Review}
\bibinfo{author}{E.~H. Gomonay}, \bibinfo{author}{V.~M. Loktev},
\newblock \bibinfo{title}{Spintronics of antiferromagnetic systems({Review
  Article})},
\newblock \bibinfo{journal}{Low Temp. Phys.} \bibinfo{volume}{40}
  (\bibinfo{year}{2014}) \bibinfo{pages}{17--35}.
\bibitem[{Baltz et~al.(2018)Baltz, Manchon, Tsoi, Moriyama, Ono, and
  Tserkovnyak}]{Baltz2018Review}
\bibinfo{author}{V.~Baltz}, \bibinfo{author}{A.~Manchon},
  \bibinfo{author}{M.~Tsoi}, \bibinfo{author}{T.~Moriyama},
  \bibinfo{author}{T.~Ono}, \bibinfo{author}{Y.~Tserkovnyak},
\newblock \bibinfo{title}{Antiferromagnetic spintronics},
\newblock \bibinfo{journal}{Rev. Mod. Phys} \bibinfo{volume}{90}
  (\bibinfo{year}{2018}) \bibinfo{pages}{015005}.
\bibitem[{Puliafito et~al.(2019)Puliafito, Khymyn, Carpentieri, Azzerboni,
  Tiberkevich, Slavin, and Finocchio}]{Puliafito2019Model}
\bibinfo{author}{V.~Puliafito}, \bibinfo{author}{R.~Khymyn},
  \bibinfo{author}{M.~Carpentieri}, \bibinfo{author}{B.~Azzerboni},
  \bibinfo{author}{V.~Tiberkevich}, \bibinfo{author}{A.~Slavin},
  \bibinfo{author}{G.~Finocchio},
\newblock \bibinfo{title}{Micromagnetic modeling of terahert oscillations in an
  antiferromagneic material driven by the spin {Hall} effect},
\newblock \bibinfo{journal}{Phys. Rev. B} \bibinfo{volume}{99}
  (\bibinfo{year}{2019}) \bibinfo{pages}{024405}.
\bibitem[{Ivanov(2014)}]{Ivanov2014femtosecond}
\bibinfo{author}{B.~A. Ivanov},
\newblock \bibinfo{title}{Spin dynamics of antiferromagnets under action of
  femtosecond laser pulses({Review Article})},
\newblock \bibinfo{journal}{Low Temp. Phys.} \bibinfo{volume}{40}
  (\bibinfo{year}{2014}) \bibinfo{pages}{91--105}.
\bibitem[{Sanches-Tejerina et~al.(2019)Sanches-Tejerina, Puliafito, Amori,
  Carpentieri, and Finocchio}]{Luis2019DW}
\bibinfo{author}{L.~Sanches-Tejerina}, \bibinfo{author}{V.~Puliafito},
  \bibinfo{author}{P.~K. Amori}, \bibinfo{author}{M.~Carpentieri},
  \bibinfo{author}{G.~Finocchio},
\newblock \bibinfo{title}{Dynamics of domain walls motion driven by spin-orbit
  torque in antiferromagneic},
\newblock \bibinfo{journal}{arXiv:} \bibinfo{volume}{1904.02491}
  (\bibinfo{year}{2019}).
\bibitem[{Gomonay et~al.(2016{\natexlab{a}})Gomonay, Kl\"aui, and
  Sinova}]{Gomonay2016domainwall}
\bibinfo{author}{O.~Gomonay}, \bibinfo{author}{M.~Kl\"aui},
  \bibinfo{author}{J.~Sinova},
\newblock \bibinfo{title}{Manipulating antiferromagnets with magnetic fields:
  {Ratchet} motion of multiple domain walls induced by asymmetric field
  pulses},
\newblock \bibinfo{journal}{Appl. Phys. Lett} \bibinfo{volume}{109}
  (\bibinfo{year}{2016}{\natexlab{a}}) \bibinfo{pages}{142404}.
\bibitem[{Gomonay et~al.(2016{\natexlab{b}})Gomonay, Jungwirth, and
  Sinova}]{Gomonay2016Highdominwall}
\bibinfo{author}{O.~Gomonay}, \bibinfo{author}{T.~Jungwirth},
  \bibinfo{author}{J.~Sinova},
\newblock \bibinfo{title}{High antiferromagnetic domain wall velocity induced
  by {N\'eel} spin-orbit torque},
\newblock \bibinfo{journal}{Phys. Rev. Lett.} \bibinfo{volume}{117}
  (\bibinfo{year}{2016}{\natexlab{b}}) \bibinfo{pages}{017202}.
\bibitem[{Shiino et~al.(2016)Shiino, Oh, Haney, Lee, Go, Park, and
  Lee}]{Shiino2016domainwall}
\bibinfo{author}{T.~Shiino}, \bibinfo{author}{S.-H. Oh}, \bibinfo{author}{P.~M.
  Haney}, \bibinfo{author}{S.-W. Lee}, \bibinfo{author}{G.~Go},
  \bibinfo{author}{B.-G. Park}, \bibinfo{author}{K.-J. Lee},
\newblock \bibinfo{title}{Antiferromagnetic domain wall motion drvien by
  spin-orbit torque},
\newblock \bibinfo{journal}{Phys. Rev. Lett.} \bibinfo{volume}{117}
  (\bibinfo{year}{2016}) \bibinfo{pages}{087203}.
\bibitem[{Wadley et~al.(2016)Wadley, Howells, \v{Z}elezn\'{y}, Andrews, Hills,
  Campion, Nov\'{a}k, Olejn\'{i}k, Maccherozzi, Dhesi, Martin, Wagner,
  Wunderlich, Freimuth, Mokrousov, Kune\v{a}, Chauhan, Grzybowski, Rushforth,
  Edmonds, Gallagher, and Jungwirth}]{Wadley2016Electrical}
\bibinfo{author}{P.~Wadley}, \bibinfo{author}{B.~Howells},
  \bibinfo{author}{J.~\v{Z}elezn\'{y}}, \bibinfo{author}{C.~Andrews},
  \bibinfo{author}{V.~Hills}, \bibinfo{author}{R.~P. Campion},
  \bibinfo{author}{V.~Nov\'{a}k}, \bibinfo{author}{K.~Olejn\'{i}k},
  \bibinfo{author}{F.~Maccherozzi}, \bibinfo{author}{S.~S. Dhesi},
  \bibinfo{author}{S.~Y. Martin}, \bibinfo{author}{T.~Wagner},
  \bibinfo{author}{J.~Wunderlich}, \bibinfo{author}{F.~Freimuth},
  \bibinfo{author}{Y.~Mokrousov}, \bibinfo{author}{J.~Kune\v{a}},
  \bibinfo{author}{J.~S. Chauhan}, \bibinfo{author}{M.~J. Grzybowski},
  \bibinfo{author}{A.~W. Rushforth}, \bibinfo{author}{K.~W. Edmonds},
  \bibinfo{author}{B.~L. Gallagher}, \bibinfo{author}{T.~Jungwirth},
\newblock \bibinfo{title}{Electrical switching of an antiferromagnet},
\newblock \bibinfo{journal}{Science} \bibinfo{volume}{351}
  (\bibinfo{year}{2016}) \bibinfo{pages}{587--590}.
\bibitem[{Fiebig et~al.(2008)Fiebig, Duong, Satoh, Aken, Miyano, Tomioka, and
  Tokura}]{ultrafast2008Fiebig}
\bibinfo{author}{M.~Fiebig}, \bibinfo{author}{N.~P. Duong},
  \bibinfo{author}{T.~Satoh}, \bibinfo{author}{B.~B.~V. Aken},
  \bibinfo{author}{K.~Miyano}, \bibinfo{author}{Y.~Tomioka},
  \bibinfo{author}{Y.~Tokura},
\newblock \bibinfo{title}{Ultrafast magnetization dynamics of antiferromagnetic
  compounds},
\newblock \bibinfo{journal}{J. Phys. D: Allp. Phys} \bibinfo{volume}{41}
  (\bibinfo{year}{2008}) \bibinfo{pages}{164005}.
\bibitem[{Landau and Lifshitz(1935)}]{LandauLifshitz1935}
\bibinfo{author}{L.~Landau}, \bibinfo{author}{E.~Lifshitz},
\newblock \bibinfo{title}{On the theory of the dispersion of magetic
  permeability in ferromagnetic bodies},
\newblock \bibinfo{journal}{Phys. Z. Sowjetunion} \bibinfo{volume}{8}
  (\bibinfo{year}{1935}) \bibinfo{pages}{153--169}.
\bibitem[{Gilbert(1955)}]{Gilbert1955}
\bibinfo{author}{T.~Gilbert},
\newblock \bibinfo{title}{A lagrangian formulation of gyromagnetic equation of
  the magnetization field},
\newblock \bibinfo{journal}{Phys. Rev.} \bibinfo{volume}{100}
  (\bibinfo{year}{1955}) \bibinfo{pages}{1243--1255}.
\bibitem[{Kruzik and Prohl(2006)}]{Prohl2006}
\bibinfo{author}{M.~Kruzik}, \bibinfo{author}{A.~Prohl},
\newblock \bibinfo{title}{Recent developments in the modeling, analysis, and
  numerics of ferromagnetism},
\newblock \bibinfo{journal}{SIAM Rev.} \bibinfo{volume}{48}
  (\bibinfo{year}{2006}) \bibinfo{pages}{439--483}.
\bibitem[{Garc\'{i}a-Cervera(2007)}]{CJreview2007}
\bibinfo{author}{C.~J. Garc\'{i}a-Cervera},
\newblock \bibinfo{title}{Numerical micromagnetics: a review},
\newblock \bibinfo{journal}{Bol. Soc. Esp. Mat. Apl.} \bibinfo{volume}{39}
  (\bibinfo{year}{2007}) \bibinfo{pages}{103--135}.
\bibitem[{Cimr\'{a}k(2008)}]{Cimrak2007}
\bibinfo{author}{I.~Cimr\'{a}k},
\newblock \bibinfo{title}{A survey on the numerics and computations for the
  {Landau-Lifshitz} equation of micromagnetism},
\newblock \bibinfo{journal}{Arch. Comput. Methods Eng.} \bibinfo{volume}{15}
  (\bibinfo{year}{2008}) \bibinfo{pages}{277--309}.
\bibitem[{Fran{\c{c}}ois and Pascal(2006)}]{alouges2006convergence}
\bibinfo{author}{A.~Fran{\c{c}}ois}, \bibinfo{author}{J.~Pascal},
\newblock \bibinfo{title}{Convergence of a finite element discretization for
  the {Landau-Lifshitz} equations in micromagnetism},
\newblock \bibinfo{journal}{Math. Models Methods Appl. Sci.}
  \bibinfo{volume}{16} (\bibinfo{year}{2006}) \bibinfo{pages}{299--316}.
\bibitem[{Romeo et~al.(2008)Romeo, Finocchio, Carpentieri, Torres, Consolo, and
  Azzerboni}]{FourRK2008}
\bibinfo{author}{A.~Romeo}, \bibinfo{author}{G.~Finocchio},
  \bibinfo{author}{M.~Carpentieri}, \bibinfo{author}{L.~Torres},
  \bibinfo{author}{G.~Consolo}, \bibinfo{author}{B.~Azzerboni},
\newblock \bibinfo{title}{A numerical solution of the magnetization reversal
  modeling in a permalloy thin film using fifth order runge-kutta method with
  adaptive step size control},
\newblock \bibinfo{journal}{Physica B.} \bibinfo{volume}{403}
  (\bibinfo{year}{2008}) \bibinfo{pages}{1163--1194}.
\bibitem[{H and N(2004)}]{Yamada2004Implicit}
\bibinfo{author}{Y.~H}, \bibinfo{author}{H.~N},
\newblock \bibinfo{title}{Implicit solution of the {Landau-Lifshitz-Gilbert}
  equation by the {Crank-Nicolson} method},
\newblock \bibinfo{journal}{J. Magn. Soc. Japan} \bibinfo{volume}{28}
  (\bibinfo{year}{2004}) \bibinfo{pages}{924--931}.
\bibitem[{Bartels and Andreas(2006)}]{bartels2006convergence}
\bibinfo{author}{S.~Bartels}, \bibinfo{author}{P.~Andreas},
\newblock \bibinfo{title}{Convergence of an implicit finite element method for
  the {Landau-Lifshitz-Gilbert} equation},
\newblock \bibinfo{journal}{SIAM J. Numer. Anal.} \bibinfo{volume}{44}
  (\bibinfo{year}{2006}) \bibinfo{pages}{1405--1419}.
\bibitem[{Fuwa et~al.(2012)Fuwa, Ishiwata, and Tsutsumi}]{implicit2012}
\bibinfo{author}{A.~Fuwa}, \bibinfo{author}{T.~Ishiwata},
  \bibinfo{author}{M.~Tsutsumi},
\newblock \bibinfo{title}{Finite difference scheme for the {Landau-Lifshitz}
  equation},
\newblock \bibinfo{journal}{Japan J. Indust. Appl. Math.} \bibinfo{volume}{29}
  (\bibinfo{year}{2012}) \bibinfo{pages}{83--110}.
\bibitem[{Wang et~al.(2001)Wang, Garc\'{i}a-Cervera, and E}]{NumGSPM2001}
\bibinfo{author}{X.~Wang}, \bibinfo{author}{C.~J. Garc\'{i}a-Cervera},
  \bibinfo{author}{W.~E},
\newblock \bibinfo{title}{A gauss-seidel projection method for micromagnetics
  simulations},
\newblock \bibinfo{journal}{J. Comput. Phys.} \bibinfo{volume}{171}
  (\bibinfo{year}{2001}) \bibinfo{pages}{357--372}.
\bibitem[{Li et~al.(2019)Li, Xie, Du, Chen, and Wang}]{panchi2019GSPM}
\bibinfo{author}{P.~Li}, \bibinfo{author}{C.~Xie}, \bibinfo{author}{R.~Du},
  \bibinfo{author}{J.~Chen}, \bibinfo{author}{X.~Wang},
\newblock \bibinfo{title}{Two improved {Gauss-Seidel} projection methods for
  {Landau-Lifshitz-Gilbert} equation},
\newblock \bibinfo{journal}{arXiv:} \bibinfo{volume}{1907.11853}
  (\bibinfo{year}{2019}).
\bibitem[{E and Wang(2000)}]{NumMethods2000}
\bibinfo{author}{W.~E}, \bibinfo{author}{X.~Wang},
\newblock \bibinfo{title}{Numerical methods for the {Landau-Lisfshitz}
  equation},
\newblock \bibinfo{journal}{SIAM J. Numer. Anal.} \bibinfo{volume}{38}
  (\bibinfo{year}{2000}) \bibinfo{pages}{1647--1665}.
\bibitem[{Chen et~al.(2019)Chen, Wang, and Xie}]{SecSemi2019}
\bibinfo{author}{J.~Chen}, \bibinfo{author}{C.~Wang}, \bibinfo{author}{C.~Xie},
\newblock \bibinfo{title}{Convergence analysis of a second-order semi-implicit
  projection method for {Landau-Lifshitz} equation},
\newblock \bibinfo{journal}{arXiv:} \bibinfo{volume}{1902.09740}
  (\bibinfo{year}{2019}).
\bibitem[{Cimr\'{a}k(2005)}]{seim2005}
\bibinfo{author}{I.~Cimr\'{a}k},
\newblock \bibinfo{title}{Error estimates for a semi-implicit numerical scheme
  solving the {Landau-Lifshitz} equation with an exchange field},
\newblock \bibinfo{journal}{IMA J. Numer. Anal.}  (\bibinfo{year}{2005})
  \bibinfo{pages}{611--634}.
\bibitem[{Xie et~al.(2019)Xie, Garc\'{i}a-Cervera, Wang, Zhou, and
  Chen}]{Changjian2019semiimplicit}
\bibinfo{author}{C.~Xie}, \bibinfo{author}{C.~J. Garc\'{i}a-Cervera},
  \bibinfo{author}{C.~Wang}, \bibinfo{author}{Z.~Zhou},
  \bibinfo{author}{J.~Chen},
\newblock \bibinfo{title}{Second-order semi-implicit projection methods for
  micromagnetics simulations},
\newblock \bibinfo{journal}{arXiv:} \bibinfo{volume}{2755719}
  (\bibinfo{year}{2019}).
\bibitem[{Coey(2014)}]{Coey2014Photocopy}
\bibinfo{author}{J.~M.~D. Coey}, \bibinfo{title}{Magnetism and Magnetic
  Materials(Photocopy Edition)}, \bibinfo{publisher}{Peking University Press},
  \bibinfo{year}{2014}.
\bibitem[{Jacobs(1961)}]{Jacobs1961spinflop}
\bibinfo{author}{I.~S. Jacobs},
\newblock \bibinfo{title}{Spin-flopping in {MnF$_2$} by high magnetic fields},
\newblock \bibinfo{journal}{J. Appl. Phys.} \bibinfo{volume}{32}
  (\bibinfo{year}{1961}) \bibinfo{pages}{S61--S62}.

\end{thebibliography}

\end{document}